\DeclareMathAlphabet{\mathcal}{OMS}{cmsy}{m}{n}
\newcommand{\vs}{\vspace{2mm}}
\newtheorem{theorem}{Theorem}[section]
\newtheorem{assumption}[theorem]{Assumption}
\newtheorem{definition}[theorem]{Definition}
\newtheorem{lemma}[theorem]{Lemma}
\theoremstyle{definition}
\newtheorem{remark}[theorem]{Remark}
\newcommand{\G}{\mathbb{G}}
\newcommand{\Z}{\mathbb{Z}}
\newcommand{\bits}{\{0,1\}}
\newcommand{\Adv}{\textbf{Adv}}
\newcommand{\mc}[1]{\mathcal{#1}}
\newcommand{\tb}[1]{\textbf{#1}}
\newcommand{\mb}[1]{\mathbf{#1}}
\newcommand{\lb}{\linebreak[0]}
\title{Multi-Client Order-Revealing Encryption}
\author{
    Jieun Eom\footnote{Korea University, Seoul, Korea.
        Email: \texttt{jieunn.eom@gmail.com}.}
    \and
    Dong Hoon Lee\footnote{Korea University, Seoul, Korea.
        Email: \texttt{donghlee@korea.ac.kr}.}
    \and
    Kwangsu Lee\footnote{Sejong University, Seoul, Korea.
        Email: \texttt{kwangsu@sejong.ac.kr}.}
}
\date{}
\begin{document}

\maketitle

\begin{abstract}
Order-revealing encryption is a useful cryptographic primitive that
provides range queries on encrypted data since anyone can compare the order
of plaintexts by running a public comparison algorithm. Most studies on
order-revealing encryption focus only on comparing ciphertexts generated by
a single client, and there is no study on comparing ciphertexts generated
by multiple clients. In this paper, we propose the concept of multi-client
order-revealing encryption that supports comparisons not only on
ciphertexts generated by one client but also on ciphertexts generated by
multiple clients. We also define a simulation-based security model for
multi-client order-revealing encryption. The security model is defined with
respect to the leakage function which quantifies how much information is
leaked from the scheme. Next, we present two specific multi-client
order-revealing encryption schemes with different leakage functions in
bilinear maps and prove their security in the random oracle model. Finally,
we give the implementation of the proposed schemes and suggest methods to
improve the performance of ciphertext comparisons.
\end{abstract}

\vs \noindent {\bf Keywords:} Symmetric-key encryption, Order-revealing
encryption, Multi-client order-revealing encryption, Bilinear maps.

\newpage
\section{Introduction}

Today, a large amount of the users' data is collected and stored in cloud
servers to provide various services utilizing this personal data. Recently,
as the concern of privacy issues in personal data has increased, it has been
an important issue to safely store personal data in a cloud server and to
prevent it from being leaked. The simplest way to solve this issue is to
perform data encryption. However, it is difficult for the cloud server to
provide ordinary services such as keyword searches, range queries, and
numeric operations on encrypted data since plaintexts are transformed to
random ciphertexts. In order to overcome this problem, advanced encryption
schemes that support computation on encrypted data such as homomorphic
encryption and functional encryption have been actively studied
\cite{Gentry09,GargGHRSW13}. However, it is difficult to provide efficient
services using them since these schemes are somewhat inefficient.

One way to allow efficient computation on encrypted data while providing
privacy of user data is to consider an efficient encryption scheme that
allows only a limited operation such as a search or range query. Searchable
symmetric encryption (SSE) is a kind of symmetric-key encryption that
supports keyword searching on encrypted data \cite{SongWP00}.
Order-preserving encryption (OPE) and order-revealing encryption (ORE) are
special kinds of symmetric-key encryption that can be used for efficient
range queries over encrypted data by comparing ciphertexts without decrypting
these ciphertexts. An OPE scheme is a deterministic encryption scheme, which
encrypts plaintexts in numeric values to generate ciphertexts in numerical
values by maintaining the order of plaintexts, so that the order of
plaintexts can be compared by simply comparing the order of ciphertexts
\cite{AgrawalKSX04,BoldyrevaCLO09,BoldyrevaCO11}. An ORE scheme is a
probabilistic encryption scheme having ciphertexts of arbitrary values, and
the order of plaintexts can be compared by running a public comparison
algorithm on ciphertexts \cite{BonehLR+15,ChenetteLWW16,LewiW16,CashLOZ16}.
The first ORE scheme of Boneh et al. \cite{BonehLR+15} provides the best
possible security, but it is inefficient since it uses heavy cryptographic
tools such as multi-linear maps. Recently, several practical ORE schemes have
been proposed but these schemes inevitably leak some information on
plaintexts in addition to the comparison result \cite{ChenetteLWW16,LewiW16,
CashLOZ16}.

All of the previous ORE studies only considered to compare ciphertexts
generated by a single client. However, in a real environment, it is necessary
to compare ciphertexts generated by multiple clients if these clients handle
related plaintexts. For example, we consider a scenario where students are
divided into multiple classes to take lectures taught by different
instructors. In this case, the grades of each class are encrypted by the
encryption key of each instructor, but if necessary, the grades of these
different classes should be comparable without decryption. As another
example, we can consider a scenario in which patients are treated by
different physicians in a hospital and their medical data are encrypted and
stored with the secret keys of physicians. In this case, a physician may want
to compare the medical data of patients that he or she has treated with the
medical data of other patients that have been treated by other physicians for
medical research purposes. To support these scenarios, a comparison key must
be provided that can compare the encrypted data generated by multiple clients
and this comparison key should be provided only to an authorized user. We
call the ORE scheme that supports comparison operations not only on
ciphertexts generated by one client but also on ciphertexts generated by
different clients, as the multi-client order-revealing encryption (MC-ORE)
scheme.

We note that an MC-ORE scheme can be easily derived from a multi-input
functional encryption (MI-FE) scheme \cite{GoldwasserGG+14}. That is, if each
ciphertext slot of an MI-FE scheme is related to the client index of an
MC-ORE scheme and an MI-FE private key for the comparison function on two
ciphertexts is provided as an MC-ORE comparison key, then we can build an
MC-ORE scheme from an MI-FE scheme. However, this approach is not practical
because an MI-FE scheme for general functions requires heavy cryptographic
tools such as multi-linear maps or indistinguishable obfuscation.

\subsection{Our Results}

We summarize the contributions of this paper which include the notion of
MC-ORE and two practical MC-ORE schemes with limited leakage in bilinear
maps.

\vs\noindent \tb{Definition.} We first introduce the notion of MC-ORE by
extending the concept of ORE \cite{BonehLR+15} to additionally support the
comparison operation on ciphertexts which are generated by multiple clients.
In an MC-ORE scheme, each client creates ciphertexts by encrypting plaintexts
with his/her secret key and anyone can publicly compare the order of two
ciphertexts generated by a single client similar to the functionality of ORE.
In addition to this basic functionality, it supports the comparison operation
of two ciphertexts created by different clients if an additional comparison
key for two clients is given. Note that the comparison of two ciphertexts
generated from different clients is not a public operation since a comparison
key given from a trusted center is needed to prevent the leakage resulting
from these comparisons. To define the security model of MC-ORE, we follow the
security model of ORE that allows the leakage \cite{ChenetteLWW16}. In this
work, we give a simulation-based security model for MC-ORE with a leakage
function $\mc{L}$. Informally, this definition states that if an adversary
can obtain information from ciphertexts of clients' plaintexts $(j_1, m_1),
\ldots, (j_q, m_q)$ where $j_k$ is the index of a client, then it can be
inferred from $\mc{L}((j_1, m_1), \ldots, (j_q, m_q))$. One difference
between our security model and that of ORE with the leakage is that the
adversary can query many comparison keys for different clients. To handle
this comparison key query, we define the static security model which requires
that the adversary should first specify a set of corrupted client indices.

\vs\noindent \tb{Basic Construction.} Next we propose two realizable MC-ORE
schemes with different leakage functions. Our first MC-ORE scheme
conceptually follows the design principle of the ORE scheme of Chenette et
al. \cite{ChenetteLWW16} that encrypts each bit of a plaintext by using a
pseudo-random function (PRF) that takes a prefix of the plaintext as an
input. However, it is not easy to extend an ORE scheme that uses a PRF to an
MC-ORE scheme that supports the comparison operation for different clients
since the outputs of PRF with different client's keys are random values. To
solve this difficulty, we use an algebraic PRF in bilinear groups which is
defined as $PRF_s(x) = H(x)^s$ where $H$ is a hash function and $s$ is a PRF
key \cite{NaorPR99}. Suppose there is a single client and the client creates
ciphertexts $C = (H(x)^s, H(x+1)^s)$ and $C' = (H(x')^s, H(x'+1)^s)$ for
plaintexts $x$ and $x'$ in binary values by using a secret key $s$. A user
can publicly check whether $x+1 = x'$ or not by comparing $H(x+1)^s =
H(x')^s$ from two ciphertexts. Now suppose there are two clients with
different secret keys $s$ and $s'$ and clients create ciphertexts $C =
(H(x)^s, H(x+1)^s)$ and $C' = (H(x')^{s'}, H(x'+1)^{s'})$ for plaintexts $x$
and $x'$ in binary values respectively. To compare two ciphertexts generated
by different clients, a user first receives a comparison key $CK =
(\hat{g}^{rs}, \hat{g}^{rs'})$ from a trusted center and checks whether $x+1
= x'$ or not by comparing $e(H(x+1)^s, \hat{g}^{rs'}) = e(H(x')^{s'},
\hat{g}^{rs})$. To extend the comparison of binary values to large values, we
modify the encoding method of Chenette et al. \cite{ChenetteLWW16} that uses
the prefixes of a plaintext. Let $m = x_1 x_2 \cdots x_n \in \bits^n$ be a
plaintext. For each $i \in [n]$, the encryption algorithm encodes two strings
$E_{i,0} = x_1 x_2 \cdots x_{i-1} \| 0 x_i$ and $E_{i,1} = x_1 x_2 \cdots
x_{i-1} \| (0 x_i + 1)$ and evaluates $C_{i,0} = H(E_{i,0})^s$ and $C_{i,1} =
H(E_{i,1})^s$. For example, the third bit of $m = 101$ is encoded as $E_{3,0}
= 10 \| 01 = 1001$ and $E_{3,1} = 10 \| (01 + 1) = 10 \| 10 = 1010$. The
ciphertext is formed as $CT = (\{ C_{i,0}, C_{i,1} \}_{i \in [n]})$. Note
that we have $m < m'$ if there is the smallest index $i^*$ such that the
prefixes of two plaintexts with $i^*-1$ length are equal and $x_{i^*} + 1 =
x'_{i^*}$. We prove the security of our first MC-ORE scheme in the
simulation-based (SIM) security model with a leakage function that reveals
the comparison result as well as the most significant differing bit.

\vs\noindent \tb{Enhanced Construction.} Our second MC-ORE scheme is the
enhanced version of the first MC-ORE scheme that reduces the leakage due to
the comparison of ciphertexts generated by a single client. In our first
scheme, a ciphertext was simultaneously used for two purposes: ciphertext
comparisons in a single client and ciphertext comparisons between different
clients. In the second scheme, we divide the ciphertext into two parts and
treat each ciphertext part differently. That is, the first ciphertext part is
only used for ciphertext comparisons in a single client, and the second
ciphertext part is only used for ciphertext comparisons between different
clients. For the first ciphertext part, we can use any ORE scheme that has
the reduced leakage \cite{CashLOZ16,LewiW16}. For the second ciphertext part,
we construct an encrypted ORE (EORE) scheme by modifying our first MC-ORE
scheme. In the EORE scheme, an (encrypted) ciphertext is created by first
generating a ciphertext of the first MC-ORE scheme and then encrypting it
with a public-key encryption scheme. Unlike the first MC-ORE scheme, this
EORE scheme does not allow ciphertext comparisons in a single client since
ciphertexts are securely encrypted. However, it allows ciphertext comparisons
between different clients since it can derive the original ciphertexts of the
first MC-ORE scheme if a comparison key is provided by the trusted center.
Therefore, there is \textit{no leakage} from the second ciphertext part and
the leakage only depends on the first ciphertext part if comparison keys are
not exposed. We prove the SIM security of our second MC-ORE scheme under the
external Diffie-Hellman assumption.

\vs\noindent \tb{Implementation.} Finally, we implement our MC-ORE schemes
and evaluate the performance of each algorithm. The proposed MC-ORE scheme
provides single-client comparison and multi-client comparison algorithms. In
the MC-ORE scheme, the most computationally expensive algorithm is the
multi-client comparison algorithm since it requires two pairing operations
per each bit comparison until the most significant differing bit (MSDB) is
found. To improve this multi-client comparison, we present other comparison
methods and compare the performance of these suggested methods. The first
method is a simple method that performs the comparison sequentially from the
ciphertext element of the most significant bit to that of the least
significant bit. It is efficient when the MSDB exists in the higher bits, but
it is inefficient when the MSDB exists in the lower bits. The second method
is a binary search method that uses a binary search instead of a sequential
search to find the MSDB location. This method performs approximately $\log n$
computations to find the MSDB location where $n$ is the length of a
plaintext. The third method is a hybrid method that combines multi-client
comparisons and single-client comparisons. This method can improve the
performance of ciphertext comparisons between multiple ciphertexts by
performing one multi-client comparison and many single-client comparisons.

\subsection{Related Work}

\noindent \tb{Order-Preserving Encryption.} The concept of OPE was introduced
by Agrawal et al. \cite{AgrawalKSX04} in the database community, and this is
a symmetric-key encryption scheme that supports efficient comparison
operations on ciphertexts since the order of plaintexts is maintained in
ciphertexts. The security model of OPE was presented by Boldyreva et al.
\cite{BoldyrevaCLO09}, and it is called indistinguishability under ordered
chosen plaintext attack (IND-OCPA). The security notion of IND-OCPA says that
an adversary can not obtain any information from ciphertexts except the order
of underlying plaintexts. However, the ciphertext space of OPE is required to
be extremely large to satisfy this IND-OCPA security. To achieve this
IND-OCPA security, several variants of OPE such as mutable OPE have been
proposed, but most of them are inefficient since they require stateful
encryption and an interactive protocol \cite{PopaLZ13,KerschbaumS14,
RocheACY16}.

\vs\noindent \tb{Order-Revealing Encryption.} Boneh et al. \cite{BonehLR+15}
introduced the notion of ORE which is a generalization of OPE where the order
of plaintexts can be publicly compared by running a comparison algorithm on
ciphertexts. They also proposed a specific ORE scheme that achieves the
IND-OCPA security by using multi-linear maps, but this scheme is quite
impractical. Chenette et al. \cite{ChenetteLWW16} constructed the first
practical ORE scheme by encrypting each bit of messages using pseudo-random
functions. They showed that their scheme achieve a weaker security model of
ORE that reveals additional information of underlying plaintexts in addition
to the order of plaintexts. After the work of Chenette et al., many ORE
schemes were proposed to reduce the additional leakage. Lewi et al.
\cite{LewiW16} constructed an IND-OCPA secure ORE scheme only for small
plaintext spaces by decomposing the encryption algorithm into two separate
functions, left encryption and right encryption where the right encryption
achieves the IND-OCPA security. Cash et al. \cite{CashLOZ16} constructed an
ORE scheme with reduced leakage by using property-preserving hash functions
in bilinear maps. Although this ORE scheme achieves to reduce the leakage, it
is inefficient due to the larger size of ciphertexts and the pairing
operation.

\vs\noindent \tb{Attacks on ORE.} Naveed et al. \cite{NaveedKW15} explored
inference attacks on encrypted database columns to recover messages against
ORE-encrypted databases. These attacks usually use the order and frequency of
plaintexts and auxiliary information such as plaintext distribution. Durak et
al. \cite{DurakDC16} and Grubbs et al. \cite{GrubbsSB+17} proposed improved
inference attacks of Naveed et al. in several ways and additionally presented
leakage-abuse attacks against ORE schemes with the specified leakage. Both
attacks showed that the leakage of ORE can be effectively used to recover
more accurate plaintexts than that was theoretically analyzed.

\section{Multi-Client Order-Revealing Encryption}

In this section, we define the syntax and the security model of multi-client
order-revealing encryption by extending those of order-revealing encryption.

\subsection{Notation}

Let $[n]$ be the set of $\{ 1, \ldots, n \}$ and $[k,n]$ be the set of $\{ k,
\ldots, n \}$. Let $\mb{cmp}(m, m')$ be a comparison function that returns 1
if $m < m'$ and returns 0, otherwise. Let $\mb{ind}(m, m')$ be an index
function that returns the index of the most significant differing bit between
plaintexts $m$ and $m'$ of $n$-bits and returns $n+1$ if $m = m'$. Let
$\mb{prefix}(m, i)$ be a prefix function that takes as input a plaintext $m =
x_1 x_2 \cdots x_n \in \bits^n$ and an index $i$ and returns $x_1 x_2 \cdots
x_{i-1}$ as the prefix of $x_i$.

\subsection{Order-Revealing Encryption}

Order-revealing encryption (ORE) is a special kind of symmetric-key
encryption that supports a comparison operation on encrypted data by using a
public procedure \cite{BonehLR+15}. In ORE, a client creates ciphertexts of
plaintexts by using his/her secret key $SK$ and uploads these ciphertexts to
a remote database. After that, anyone can compare the order of two
ciphertexts $CT$ and $CT'$ by using a public comparison algorithm. The
following is the syntax of ORE given by Chenette et al. \cite{ChenetteLWW16}.

\begin{definition}[ORE] \label{def:ore}
An ORE scheme consists of three algorithms, \tb{Setup}, \tb{Encrypt},
\tb{Compare} which are defined over a well-ordered domain $\mc{D}$ as
follows:
\begin{description}
\item [\tb{Setup}($1^\lambda$).] The setup algorithm takes as input a
    security parameter $\lambda$ and outputs a secret key $SK$.

\item [\tb{Encrypt}($m, SK$).] The encryption algorithm takes as input a
    plaintext $m \in \mc{D}$ and the secret key $SK$ and outputs a
    ciphertext $CT$.

\item [\tb{Compare}($CT, CT'$).] The comparison algorithm takes as input
    two ciphertexts $CT$ and $CT'$ and outputs a comparison bit $b \in
    \bits$.
\end{description}
The correctness of ORE is defined as follows: For all $SK$ generated by
\tb{Setup} and any $CT, CT'$ generated by \tb{Encrypt} on plaintexts $m, m'$,
it is required that
    $\mb{Compare}(CT, CT') = \mb{cmp}(m, m')$.
\end{definition}

The best possible security of ORE, which is IND-OCPA, was defined by Boneh et
al. \cite{BonehLR+15}. The simulation-based security of ORE with additional
leakage $\mc{L}$ was defined by Chenette et al. \cite{ChenetteLWW16}.

\subsection{Multi-Client Order-Revealing Encryption}

Multi-client order-revealing encryption (MC-ORE) is an extension of ORE that
supports comparison operations not only between ciphertexts generated by a
single client but also between ciphertexts generated by different clients. In
MC-ORE, each client of an index $j$ creates ciphertexts of plaintexts by
using his/her secret key $SK_j$ which is given by a trusted center. Anyone
can compare two ciphertexts $CT_j$ and $CT_j'$ generated by the single client
by using a public comparison algorithm as the same as in ORE. In addition, a
client can compare two ciphertexts $CT_j$ and $CT_k'$ generated by different
clients with different indices $j$ and $k$ if the client obtains a comparison
key $CK_{j,k}$ from the trusted center. The syntax of MC-ORE is given as
follows.

\begin{definition}[MC-ORE] \label{def:mc-ore}
An MC-ORE scheme consists of six algorithms, \tb{Setup}, \tb{GenKey},
\tb{Encrypt}, \tb{Compare}, \tb{GenCmpKey}, and \tb{CompareMC}, which are
defined as follows:
\begin{description}
\item [\tb{Setup}($1^\lambda, N$).] The setup algorithm takes as input a
    security parameter $\lambda$ and the number of clients $N \in
    \mathbb{N}$ and outputs a master key $MK$ and public parameters $PP$.

\item [\tb{GenKey}($j, MK, PP$).] The key generation algorithm takes as input
    a client index $j \in [N]$, the master key $MK$, and the public
    parameters $PP$. It outputs a secret key $SK_j$ for the client index
    $j$.

\item [\tb{Encrypt}($m, SK_j, PP$).] The encryption algorithm takes as input
    a plaintext $m \in \mc{D}$, the secret key $SK_j$, and the public
    parameters $PP$. It outputs a ciphertext $CT_j$.

\item [\tb{Compare}($CT_j, CT'_j, PP$).] The comparison algorithm takes as
    input two ciphertexts $CT_j, CT'_j$ of the same client index $j$ and
    the public parameters $PP$. It outputs a comparison bit $b \in \bits$.

\item [\tb{GenCmpKey}($j, k, MK, PP$).] The comparison key generation
    algorithm takes as input two client indices $j, k$, the master key
    $MK$, and the public parameters $PP$. It outputs a comparison key
    $CK_{j,k}$ for two clients.

\item [\tb{CompareMC}($CT_j, CT'_k, CK_{j,k}, PP$).] The multi-client
    comparison algorithm takes as input two ciphertexts $CT_j, CT'_k$ of
    two client indices $j, k$, the comparison key $CK_{j,k}$, and the
    public parameters $PP$. It outputs a comparison bit $b \in \bits$.
\end{description}
The correctness of MC-ORE is defined as follows: For all $PP, MK, \{SK_j\}_{j
\in [N]}$ generated by \tb{Setup} and \tb{GenKey}, any $CK_{j,k}$ generated
by \tb{GenCmpKey}, and any $CT_j, CT'_j, CT''_k$ generated by \tb{Encrypt} on
plaintexts $m, m', m''$, it is required that:
    \begin{align*}
    & \mb{Compare}(CT_j, CT'_j, PP) = \mb{cmp}(m, m') \text{ and } \\
    & \mb{CompareMC}(CT_j, CT''_k, CK_{j,k}, PP) = \mb{cmp}(m, m'').
    \end{align*}
\end{definition}

The simulation-based security (SIM-security) model of MC-ORE is defined with
a leakage function which enables quantifying any information inevitably
leaked from the scheme. Since the leakage is affected by whether comparison
keys are exposed, the leakage function $\mc{L}_S$ is defined with respect to
a set $S$ of the revealed comparison keys. In the real experiment, an
adversary can access a comparison key generation oracle to obtain any
comparison key as well as an encryption oracle to obtain any ciphertext of
its choice $(j_i, m_i)$ where $j_{i}$ is the client index corresponding to
the $i$-th message $m_{i}$. Eventually, the adversary outputs the deducing
result from the given information. In the ideal experiment, the adversary
also can obtain any comparison key and any ciphertext, but all values are
generated by the simulator which has only the information derived from the
leakage function $\mc{L}_S((j_1, m_1), \ldots, (j_q, m_q))$. The security is
proved by showing the outputs of two distributions are indistinguishable.

However, the leakage function is influenced by the order of ciphertext
queries and comparison key queries. When $(j_1,m_1)$ and $(j_2,m_2)$ are
queried to the encryption oracle, the simulator generates ciphertexts
$CT_{j_1}$ and $CT_{j_2}$ with no leakage if the comparison key
$CK_{j_1,j_2}$ was not exposed. After that, if the adversary requests
$CK_{j_1,j_2}$ causing the leakage $\mc{L}_S((j_1,m_1), (j_2,m_2))$, it can
identify that there is something wrong in the simulation of $CT_{j_1}$ and
$CT_{j_2}$. That is, the simulator should have generated the ciphertexts by
predicting the leakage but it is difficult to simulate with such a flexible
leakage function.
In addition, when $CK_{j_1,j_3}$ and $CK_{j_2,j_3}$ are exposed, the
simulator generates $CT_{j_1}$ and $CT_{j_2}$ with no leakage since
$CK_{j_1,j_2}$ is not exposed. After that, if $(j_3,m_3)$ is queried to the
encryption oracle, the simulator generates $CT_{j_3}$ with the leakage
$\mc{L}_S((j_1,m_3), (j_2,m_3))$. Again, the adversary can notice that the
simulation of $CT_{j_1}$ and $CT_{j_2}$ is wrong. Thus, we define the static
version of the SIM-security model in which a set $S$ of revealed comparison
keys and the ciphertext queries are initially fixed. The static SIM-security
model of MC-ORE with the leakage function $\mc{L}_S$ is defined as follows.

\begin{definition}[\tb{Static SIM-Security with Leakage}] \label{def:static-security}
For a security parameter $\lambda$, let $\mc{A}$ be an adversary and $\mc{B}$
be a simulator. Let $S = \{(j,k)\}_{j,k\in [N]}$ be a set of index tuples
where $CK_{j,k}$ is revealed and let $\mc{L}_S(\cdot)$ be a leakage function.
The experiments of $\mathsf{REAL}_{\mc{A}}^{ MC\text{-}ORE}(\lambda)$ and
$\mathsf{SIM}_{\mc{A}, \mc{B}, \mc{L}}^{MC\text{-}ORE}(\lambda)$ are defined
as follows:

\begin{center}
\begin{tabular}{|c|} \hline
\begin{tabular}{l}
$\underline{\mathsf{REAL}_{\mc{A}}^{MC\text{-}ORE}(\lambda)}$\\
1. $\big(st_{\mc{A}}, S, ((j_1, m_1), \cdots, (j_q, m_q))\big)\gets \mc{A}(1^\lambda)$\\
2. $(PP, MK) \gets \mb{Setup}(1^\lambda, N)$\\
3. $CK_{j,k} \gets \mb{GenCmpKey}(j,k,MK,PP), ~\forall (j,k) \in S$\\
4. for $1\le i\le q$,\\
~~~~$CT_{j_i} \gets \mb{Encrypt}(m_i, SK_{j_i}, PP)$\\
5. Output $(CT_{j_1},\cdots, CT_{j_q})$ and $st_{\mc{A}}$\\
\end{tabular}
\\\\
\begin{tabular}{l}
$\underline{\mathsf{SIM}_{\mc{A},\mc{B},\mc{L}}^{MC\text{-}ORE}(\lambda)}$\\
1. $\big(st_{\mc{A}}, S, ((j_1, m_1), \cdots, (j_q, m_q))\big)\gets \mc{A}(1^\lambda)$\\
2. $(st_{\mc{B}}, PP) \gets \mc{B}(1^\lambda, N)$\\
3. $CK_{j,k} \gets \mc{B}(st_{\mc{B}}), ~\forall (j,k)\in S$\\
4. for $1\le i\le q$,\\
~~~~$\big( st_{\mc{B}}, CT_{j_i} \big) \gets \mc{B} \big( st_{\mc{B}},
\mc{L}_S((j_1, m_1), \cdots, (j_i, m_i)) \big)$\\
5. Output $(CT_{j_1},\cdots, CT_{j_q})$ and $st_{\mc{A}}$\\
\end{tabular}
\\\hline
\end{tabular}
\end{center}

\vs \noindent We say that an MC-ORE scheme is ST-SIM secure if for all
polynomial-size adversaries $\mc{A}$, there exists a polynomial-size
simulator $\mc{B}$ such that the outputs of the two distributions
$\mathsf{REAL}_{\mc{A}}^{MC\text{-}ORE}(\lambda)$ and $\mathsf{SIM}_{\mc{A},
\mc{B}, \mc{L}}^{MC\text{-}ORE}(\lambda)$ are indistinguishable.
\end{definition}

\begin{remark}
For $S = \{(j,k)\}_{j,k\in [N]}$ of index tuples where the comparison key
$CK_{j,k}$ is revealed, let $\mc{L}_S$ be the following leakage function:
    \begin{align*}
    &\mc{L}_S \big( (j_1, m_1), \cdots, (j_q, m_q) \big)
    = \big\{ \mb{cmp}(m_{i'}, m_{i}):
      1 \le i' < i \le q, j_{i'} = j_i ~\text{or}~ (j_{i'}, j_{i}) \in S \big\}.
    \end{align*}
If an MC-ORE scheme is secure with leakage $\mc{L}_S$, then it is IND-OCPA
secure.
\end{remark}

\section{Basic MC-ORE Construction}

In this section, we propose our first construction of MC-ORE with leakage and
prove the ST-SIM security of our scheme.

\subsection{Asymmetric Bilinear Groups}

Let $\mc{G}_{as}$ be a group generator algorithm that takes as input a
security parameter $\lambda$ and outputs a tuple $(p, \G, \hat{\G}, \G_T, e)$
where $p$ is a random prime and $\G, \hat{\G}$, and $\G_T$ be three cyclic
groups of prime order $p$. Let $g$ and $\hat{g}$ be generators of $\G$ and
$\hat{\G}$, respectively. The bilinear map $e : \G \times \hat{\G}
\rightarrow \G_{T}$ has the following properties:
\begin{enumerate}
\item Bilinearity: $\forall u \in \G, \forall \hat{v} \in \hat{\G}$ and
    $\forall a,b \in \Z_p$, $e(u^a,\hat{v}^b) = e(u,\hat{v})^{ab}$.
\item Non-degeneracy: $\exists g \in \G, \hat{g} \in \hat{\G}$ such that
    $e(g,\hat{g})$ has order $p$ in $\G_T$.
\end{enumerate}
We say that $\G, \hat{\G}, \G_T$ are asymmetric bilinear groups if the group
operations in $\G, \hat{\G}$, and $\G_T$ as well as the bilinear map $e$ are
all efficiently computable, but there are no efficiently computable
isomorphisms between $\G$ and $\hat{\G}$.

\begin{assumption}[External Diffie-Hellman, XDH]
Let $(p, \G, \hat{\G}, \G_T, e)$ be a tuple randomly generated by
$\mc{G}_{as}(1^\lambda)$ where $p$ is a prime order of the groups. Let $g,
\hat{g}$ be random generators of groups $\G, \hat{\G}$, respectively. The XDH
assumption is that the decisional Diffie-Hellman(DDH) assumption holds in
$\G$. That is, if the challenge tuple
    $$D = \big( (p, \G, \hat{\G}, \G_T, e), g, \hat{g}, g^a, g^b \big)
    \mbox{ and } T$$
are given, no PPT algorithm $\mc{A}$ can distinguish $T = T_0 = g^{ab}$ from
$T = T_1 = g^c$ with more than a negligible advantage. The advantage of
$\mc{A}$ is defined as
    $\Adv_{\mc{A}}^{XDH} (\lambda) = \big|
    \Pr[\mc{A}(D,T_0) = 0] - \Pr[\mc{A}(D,T_1) = 0] \big|$
where the probability is taken over random choices of $a, b, c \in \Z_p$.
\end{assumption}

\subsection{Construction}

Before we present our basic MC-ORE scheme, we first define a leakage function
for our scheme. Let $N \in \mathbb{N}$ be the maximum number of clients and
$S = \{(j,k)\}_{j,k\in [N]}$ be a set of client index tuples where a
comparison key $CK_{j,k}$ is revealed. A leakage function $\mc{L}_S$ is
defined as follows:
    \begin{align*}
    \mc{L}_S \big( (j_1, m_1), \cdots, (j_q, &m_q) \big) = \big\{ \mb{cmp}(m_{i'}, m_{i}),
    \mb{ind}(m_{i'}, m_{i}): 1 \le i' < i \le q, j_{i'} = j_i ~\text{or}~ (j_{i'}, j_{i}) \in S \big\}.
    \end{align*}

If $S = \emptyset$, then $\mc{L}_S$ becomes equal to the leakage function
$\mc{L}$ defined by Chenette et al.\cite{ChenetteLWW16}. Otherwise, i.e. if
$S\neq \emptyset$, some comparison keys are revealed and it causes increased
leakage. Our basic MC-ORE scheme is described as follows:

\begin{description}
\item [\tb{MC-ORE.Setup}($1^\lambda, N$).] This algorithm first generates
    bilinear groups $\G, \hat{\G}, \G_T$ of prime order $p$ with group
    generators $g \in \G$ and $\hat{g} \in \hat{\G}$. It chooses a random
    exponent $s_j \in \Z_p$ for all $j \in [N]$ and outputs a master key
    $MK = \{ s_j \}_{j \in [N]}$ and public parameters $PP = ((p, \G,
    \hat{\G}, \G_T, e), g, \hat{g}, H)$ where $H:\bits^* \to \G$ is a
    full-domain hash function.

\item [\tb{MC-ORE.GenKey}($j, MK, PP$).] Let $MK = \{ s_1, \cdots, s_N \}$.
    It outputs a secret key $SK_j = s_j$.

\item [\tb{MC-ORE.Encrypt}($m, SK_j, PP$).] Let $m = x_1 x_2 \cdots x_n \in
    \bits^n$ and $SK_j = s_j$. For each $i \in [n]$, it computes $C_{i,0} =
    H( \mb{prefix}(m, i) \| 0 x_i )^{s_j}$ and $C_{i,1} = H( \mb{prefix}(m,
    i) \| (0 x_i + 1) )^{s_j}$ where $\|$ is the concatenation of two bit
    strings. It outputs a ciphertext $CT_j = \big( \{ C_{i,0}, C_{i,1}
    \}_{i \in [n]} \big)$.

\item [\tb{MC-ORE.Compare}($CT_j, CT'_j, PP$).] For the same client index
    $j$, let $CT_j = ( \{ C_{i,0}, C_{i,1} \}_{i \in [n]} )$ and $CT'_j =
    ( \{C'_{i,0}, C'_{i,1}\}_{i \in [n]} )$.
    It first finds the smallest index $i^*$ such that $C_{i^*,0} \neq
    C'_{i^*,0}$ by sequentially comparing $C_{i,0}$ and $C'_{i,0}$. If such
    index $i^*$ exists and $C_{i^*,1} = C'_{i^*,0}$ holds, then it outputs
    1. If such index $i^*$ exists and $C_{i^*,0} = C'_{i^*,1}$, then it
    outputs 0. If no such index $i^*$ exists, then it outputs 0.

\item [\tb{MC-ORE.GenCmpKey}($j, k, MK, PP$).] Let $s_j$ and $s_k$ be the
    secret keys of client indices $j$ and $k$. It chooses a random exponent
    $r \in \Z_p$ and computes $K_0 = \hat{g}^{r s_j}, K_1 = \hat{g}^{r s_k}$.
    It outputs a comparison key $CK_{j,k} = ( K_0, K_1 )$.

\item [\tb{MC-ORE.CompareMC}($CT_j, CT'_k, CK_{j,k}, PP$).] Let $CT_j = (
    \{ C_{i,0}, C_{i,1} \}_{i \in [n]} )$ and $CT'_k = ( \{ C'_{i,0},
    C'_{i,1} \}_{i \in [n]} )$. Let $CK_{j,k} = (K_0, K_1)$.
    It first finds the smallest index $i^*$ such that $e(C_{i^*,0}, K_1)
    \neq e(C'_{i^*,0}, K_0)$ by sequentially comparing $e(C_{i,0}, K_1)$
    and $e(C'_{i,0}, K_0)$. If such index $i^*$ exists and $e(C_{i^*,1},
    K_1) = e(C'_{i^*,0}, K_0)$ holds, then it outputs 1. If such index
    $i^*$ exists and $e(C_{i^*,0}, K_1) = e(C'_{i^*,1}, K_0)$, then it
    outputs 0. If no such index $i^*$ exists, then it outputs 0.
\end{description}

\subsection{Correctness}

To show the correctness of the above scheme, we define encoding functions
$E_0, E_1$ that take $(i, m)$ as input and output the encoded $i$-th bit of
$m = x_1\cdots x_n\in\{0,1\}^n$ as follows:
    \begin{align*}
    E_0(i,m) = \mb{prefix}(m,i) \| 0 x_i,~ E_1(i,m) = \mb{prefix}(m,i) \| (0x_i+1).
    \end{align*}
The encoding functions satisfy the following conditions. If $m = m'$,
$E_0(i,m) = E_0(i,m')$ holds for all $i \in [n]$. If $m < m'$ and $i^*$ is
the smallest index such that $x_{i^*} \neq x'_{i^*}$, then $E_0(i,m) =
E_0(i,m')$ holds for all $i < i^*$ and $E_0(i,m) \neq E_0(i,m')$ holds for
all $i \ge i^*$, and especially, $E_1(i^*,m) = E_0(i^*,m')$ holds.

\vs\noindent Let $SK_j = s_j$ be the secret key of a client index $j$ and
$CT_j = ( \{ C_{i,0}, C_{i,1} \}_{i \in [n]} )$ and $CT'_j = ( \{ C'_{i,0},
C'_{i,1} \}_{i \in [n]} )$ be ciphertexts on messages $m = x_1 x_2 \cdots x_n
\in \bits^n$ and $m' = x'_1 x'_2 \cdots x'_n \in \bits^n$. If $m < m'$, there
must be the smallest index $i^*$ such that $x_i = x'_i$ for all $i < i^*$ and
$x_{i^*} \neq x'_{i^*}$. Thus, we have that
\begin{align*}
    C_{i,0}
        &= H(E_0(i,m))^{s_j} = H(E_0(i,m'))^{s_j} = C'_{i,0} ~~\forall i < i^*
         ~~\text{and} \\
    C_{i^*,1}
        &= H(E_1(i^*,m))^{s_j} = H(E_0(i^*,m'))^{s_j} = C'_{i^*,0}.
\end{align*}

\vs\noindent Let $SK_j = s_j$ and $SK_k = s_k$ be the secret keys of two
client indices $j$ and $k$, and $CK_{j,k} = (K_0, K_1) = (\hat{g}^{r s_j},
\hat{g}^{r s_k})$ be the comparison key. Let $CT_j = ( \{ C_{i,0}, C_{i,1}
\}_{i \in [n]} )$ and $CT'_k = ( \{ C'_{i,0}, C'_{i,1} \}_{i \in [n]} )$ be
ciphertexts on messages $m$ and $m'$. If $m < m'$, there must be the smallest
index $i^*$ such that $x_i = x'_i$ for all $i < i^*$ and $x_{i^*} \neq
x'_{i^*}$. Thus, we have that
\begin{align*}
    e(C_{i,0}, K_1)
        &= e(H(E_0(i,m))^{s_j}, \hat{g}^{r s_k})
         = e(H(E_0(i,m)), \hat{g})^{r s_j s_k} \\
        &= e(H(E_0(i,m'))^{s_k}, \hat{g}^{r s_j})
         = e(C'_{i,0}, K_0) ~~\forall i < i^*
         ~~\text{and} \\
    e(C_{i^*,1}, K_1)
        &= e(H(E_1(i^*,m))^{s_j}, \hat{g}^{r s_k})
         = e(H(E_1(i^*,m)), \hat{g})^{r s_j s_k} \\
        &= e(H(E_0(i^*,m'))^{s_k}, \hat{g}^{r s_j})
         = e(C'_{i^*,0}, K_0).
\end{align*}

\subsection{Security Analysis}

We prove the security of the basic MC-ORE scheme with the leakage function
$\mc{L}_S$ in the ST-SIM security model. We define a sequence of experiments
from $\mb{H}_0$ corresponding to the real experiment to $\mb{H}_3$
corresponding to the ideal experiment and show that the outputs of two
experiments are indistinguishable. At first, the ciphertexts of clients whose
comparison keys are not exposed are randomly generated. In the next
experiment, the ciphertexts of clients whose comparison keys are exposed are
generated with random values. Finally, in the last experiment $\mb{H}_3$, the
ciphertexts are simulated with respect to the leakage function $\mc{L}_S$,
and consequently $\mb{H}_3$ corresponds to the ideal experiment. The details
are given as follows.

\begin{theorem} \label{thm:mc-ore-1}
The basic MC-ORE scheme is ST-SIM secure with the leakage function $\mc{L}_S$
in the random oracle model if the XDH assumption holds.
\end{theorem}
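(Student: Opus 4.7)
The plan is a hybrid argument through experiments $\mb{H}_0, \mb{H}_1, \mb{H}_2, \mb{H}_3$, where $\mb{H}_0$ is the real experiment and $\mb{H}_3$ is the ideal one. The pivotal combinatorial observation that drives the simulator is that $E_0(i, m) = E_0(i, m')$ iff $i \le \mb{ind}(m, m')$, and $E_1(i^*, m) = E_0(i^*, m')$ at the MSDB $i^* = \mb{ind}(m, m')$ iff $\mb{cmp}(m, m') = 1$ (symmetrically for $0$). Since $H$ is a programmable random oracle, every observable within-client collision and every cross-client pairing equality $e(C_{i,b}, K_1) = e(C'_{i,b'}, K_0)$ is determined by the $(\mb{cmp}, \mb{ind})$-data in $\mc{L}_S$; this is what ultimately enables a leakage-only simulator.

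The first step, $\mb{H}_0 \to \mb{H}_1$, handles \emph{isolated} clients (those $j$ with no $(j,\cdot)$ or $(\cdot,j)$ in $S$): I replace every ciphertext component $H(E)^{s_j}$ with a uniform $\G$-element shared across repeats of $(j, E)$. This reduces to XDH one isolated client at a time: the reduction embeds $(g, g^{s_j}, g^b, T)$ and uses random self-reducibility of DDH to deliver many independent pairs $(H(E), H(E)^{s_j})$; because $j$ is isolated, $s_j$ appears in no comparison key, so the rest of the simulation is routine. In $\mb{H}_2$, I re-express each non-isolated component at encoding $E$ as $g^{s_j t_E}$, where $t_E$ is a fresh uniform exponent \emph{shared} across clients whenever $(j,k) \in S$, which preserves the \tb{CompareMC} equality $e(g^{s_j t}, \hat{g}^{r s_k}) = e(g^{s_k t}, \hat{g}^{r s_j})$. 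The honest distribution already has this form with $t_E = h_E$ common to \emph{all} clients, so $\mb{H}_1 \to \mb{H}_2$ only decorrelates the $t_E$-values between client groups that are not joined by any exposed comparison key.

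In $\mb{H}_3$ the simulator $\mc{B}$ is stripped of all dependence on the plaintexts: $\mc{B}$ picks every $s_j$ and every CK-randomness itself, outputs $PP$ and each $CK_{j,k}$ honestly, and maintains an abstract table of tokens of the form $g^{s_{j_i} t}$. On the $i$-th query, $\mc{B}$ reads $\{\mb{cmp}(m_\ell, m_i), \mb{ind}(m_\ell, m_i)\}$ from $\mc{L}_S$ for all leakage-observable $\ell$, applies the combinatorial rule slot by slot to decide whether each of the $2n$ output positions reuses an existing token or spawns a fresh one, and emits the ciphertext. Since $\mc{B}$'s equivalence structure exactly reproduces that of $\mb{H}_2$ on every leakage-observable pair, $\mb{H}_2$ and $\mb{H}_3$ are identically distributed.

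The hard part will be the $\mb{H}_1 \to \mb{H}_2$ step: the XDH reduction must produce all $S$-entangled comparison keys while holding some $s_k$ as the DDH unknown in $\G$ and having no access to $\hat{g}^{s_k}$. My plan is to exploit that $\hat{g}^{r s_k}$ with fresh $r$ is statistically uniform in $\hat{\G}$, so the reduction picks one client's secret $s_k$ in each $S$-component as the unknown, expresses every other secret in the component as a reduction-known multiple $\alpha s_k$, samples an anchor $\hat{U} \in \hat{\G}$ uniformly, and returns $CK_{k',k''} = (\hat{U}^{\alpha_{k'}}, \hat{U}^{\alpha_{k''}})$, sidestepping the missing $\hat{g}^{s_k}$. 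The same XDH switch handles the decorrelation across unrelated $S$-components, relying on the absence of any bridging comparison key to conclude that no efficient test detects the lost correlation.
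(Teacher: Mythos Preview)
Your proposal is correct and follows essentially the same hybrid structure as the paper's own proof: the same four hybrids $\mb{H}_0,\ldots,\mb{H}_3$, the same mXDH-based reductions for isolated clients and for $S$-connected components (handled one group at a time), and the same leakage-driven simulator in $\mb{H}_3$ that rebuilds the collision pattern from $(\mb{cmp},\mb{ind})$. In particular, your comparison-key trick---sample $\hat{U}\in\hat{\G}$ uniformly and output $CK_{k',k''}=(\hat{U}^{\alpha_{k'}},\hat{U}^{\alpha_{k''}})$---is exactly the paper's device of setting every secret in the target component to $a\cdot s_j$ and issuing $CK_{j,k}=(\hat{g}^{rs_j},\hat{g}^{rs_k})$ with a fresh $r$ that absorbs the unknown $a$; one cosmetic slip is that $E_0(i,m)=E_0(i,m')$ iff $i<\mb{ind}(m,m')$, not $i\le$.
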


\begin{proof}
We prove the security of the basic MC-ORE scheme through a sequence of hybrid
experiments. The first experiment is defined as the real MC-ORE security
experiment and the last one is defined as the ideal experiment with the
leakage function $\mc{L}_S$ in which the adversary has no advantage. The
hybrid experiments $\mb{H}_0, \mb{H}_1, \mb{H}_2$, and $\mb{H}_3$ are defined
as follows:
\begin{description}
\item [$\mb{H}_0$]: This experiment corresponds to the real world
    experiment.

\item [$\mb{H}_1$]: This experiment is similar to $\mb{H}_0$ except that
    the ciphertext $CT_j$ such that $(j, j')\notin S$ for any client index
    $j'$ is generated by using random elements.

\item [$\mb{H}_2$]: This experiment is similar to $\mb{H}_1$ except that
    the ciphertext $CT_j$ such that $(j, j')\in S$ for some client index
    $j'$ is generated by using random elements.

\item [$\mb{H}_3$]: In this experiment, the ciphertexts are generated with
    the leakage function $\mc{L}_S$ and the rest are same to $\mb{H}_2$.
    This experiment corresponds to the ideal world experiment.
\end{description}

From the following Lemmas \ref{lem:mc-ore-1-h0-h1}, \ref{lem:mc-ore-1-h1-h2},
and \ref{lem:mc-ore-1-h2-h3} that claim the indistinguishability of the above
experiments, we have that $\mb{H}_0$ and $\mb{H}_3$ are computationally
indistinguishable.
\end{proof}

Before we present additional Lemmas for the proof of the above theorem, we
define the encoded messages $E_0(k,m) = \mb{prefix}(m,k) \| 0x_k$ and
$E_1(k,m) = \mb{prefix}(m,k) \| (0x_k + 1)$ where $m = x_1 \cdots x_n \in
\bits^n$. In addition, we introduce the multi-external Diffie-Hellman
assumption.

\begin{assumption}[Multi-External Diffie-Hellman, mXDH]
Let $(p, \G, \hat{\G}, \G_T, e)$ be a tuple randomly generated by
$\mc{G}_{as}(1^\lambda)$ where $p$ is a prime order of the groups. Let $g,
\hat{g}$ be random generators of groups $\G, \hat{\G}$, respectively. The
mXDH assumption is that if the challenge tuple
\begin{align*}
    D = \big( (p, \G, \hat{\G}, \G_T, e), g, \hat{g}, g^a,
        \{ g^{b_{i,1}}, \cdots, g^{b_{i,n}} \}_{i \in [t]} \big) \mbox{ and } T
\end{align*}
are given, no PPT algorithm $\mc{A}$ can distinguish $T = T_0 = \big( \{
g^{ab_{i,1}}, \cdots, g^{ab_{i,n}} \}_{i\in [t]} \big)$ from $T = T_1 = \big
\{(g^{c_{i,1}}, \cdots, g^{c_{i,n}} \}_{i\in [t]} \big)$ with more than a
negligible advantage. The advantage of $\mc{A}$ is defined as
    $\Adv_{\mc{A}}^{mXDH} (\lambda) = \big|
    \Pr[\mc{A}(D,T_0) = 0] - \Pr[\mc{A}(D,T_1) = 0] \big|$
where the probability is taken over random choices of $a, (b_{i,1}, \cdots,
b_{i,n}), (c_{i,1}, \cdots, c_{i,n}) \in \Z_p$ for all $i \in [t]$.
\end{assumption}

This mXDH assumption is equivalent to the XDH assumption since the challenge
tuple of mXDH assumption can be obtained from the XDH assumption by using the
random self-reducibility property \cite{NaorR04}.

\begin{lemma} \label{lem:mc-ore-1-h0-h1}
The hybrid experiments $\mb{H}_0$ and $\mb{H}_1$ are computationally
indistinguishable to the polynomial-time adversary assuming that the mXDH
assumption holds.
\end{lemma}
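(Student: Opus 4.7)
The plan is a hybrid argument over the clients whose secrets never appear in a revealed comparison key, with each single-client replacement reduced to mXDH. Let $J = \{ j \in [N] : (j, j') \notin S \text{ for every } j' \in [N] \}$; these are exactly the clients whose ciphertexts change between $\mb{H}_0$ and $\mb{H}_1$. I would interpolate sub-hybrids $\mb{G}_0 = \mb{H}_0, \mb{G}_1, \ldots, \mb{G}_{|J|} = \mb{H}_1$ in which the ciphertexts of the first $\ell$ clients of $J$ (in some fixed order) are replaced by uniform group elements that are consistent across repeated encoded inputs, while the remaining clients still produce honest ciphertexts.

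For the step $\mb{G}_{\ell-1} \to \mb{G}_\ell$ targeting a client $j^* \in J$, I would build a reduction $\mc{B}$ that receives an mXDH challenge $\big( g^a, \{ g^{b_{i,k}} \}, T \big)$ and implicitly sets $s_{j^*} := a$ while sampling every other $s_{j'} \in \Z_p$ honestly. Because $j^*$ is involved in no pair in $S$, every comparison key $CK_{j_1, j_2}$ for $(j_1, j_2) \in S$ is computable from the known $s_{j_1}, s_{j_2}$. The random oracle $H$ is programmed via a single lookup table: on any fresh input $x$ — whether arriving from a direct adversarial query, from an encryption to $j^*$, or from an encryption to another client — $\mc{B}$ consumes the next unused $g^{b_{i,k}}$ from the challenge and sets $H(x) := g^{b_{i,k}}$; on a repeated input it returns the stored value. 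For ciphertext elements of $j^*$, the required $H(x)^{s_{j^*}}$ is answered with the corresponding challenge value $T_{i,k}$; for ciphertext elements of any other client $j'$, it is computed as $(g^{b_{i,k}})^{s_{j'}}$ using the known $s_{j'}$.

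The key observation is that when $T = T_0$ we have $T_{i,k} = g^{a b_{i,k}} = H(x)^{s_{j^*}}$, so $\mc{B}$ perfectly simulates $\mb{G}_{\ell-1}$; when $T = T_1$, each $T_{i,k}$ is an independent uniform element, making $j^*$'s ciphertext components uniform but consistent across repeated encoded inputs, which is exactly $\mb{G}_\ell$. Summing the indistinguishability of $|J|$ consecutive sub-hybrids yields the lemma under mXDH, which by the equivalence already noted in the paper follows from XDH.

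The main obstacle will be the random-oracle bookkeeping: the reduction must program $H$ identically regardless of whether the input arises from an adversary's oracle query or from an encryption query for any client, and the polynomial number of challenge positions must cover all distinct hash inputs encountered. Both points are handled by instantiating mXDH with a sufficiently large polynomial $t \cdot n$ (via random self-reducibility, as the paper notes) and keeping a single shared table. A subtlety worth checking is that single-client public comparisons on $j^*$'s replaced ciphertexts must remain well-defined: this holds automatically because identical encoded prefixes reuse the same table entry, so whether the stored element is $g^{a b_{i,k}}$ or a fresh random element, the equalities tested by \tb{MC-ORE.Compare} on $CT_{j^*}, CT'_{j^*}$ are preserved.
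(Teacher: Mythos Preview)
Your proposal is correct and follows essentially the same strategy as the paper: a hybrid over the clients whose indices appear in no pair of $S$, embedding the mXDH secret $a$ as the target client's key, and programming the random oracle so that the challenge values $T_{i,k}$ serve as that client's ciphertext components. Your random-oracle bookkeeping is in fact a bit cleaner than the paper's---you assign every fresh hash output a $g^{b_{i,k}}$ and let the shared table enforce prefix consistency automatically, whereas the paper pre-programs the table by explicitly walking the $\mb{ind}/\mb{cmp}$ structure of the target client's messages---but the reduction and the resulting bound are the same.
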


\begin{proof}
To prove this lemma, we additionally define a sequence of hybrid experiments
$\mb{H}_0 = \mb{H}_{0,0}, \mb{H}_{0,1}, \ldots, \lb \mb{H}_{0,\tilde{q}} =
\mb{H}_1$ for some $\tilde{q}$ as follows.
\begin{description}
\item [$\mb{H}_{0, \mu}$]: Let $I = ( j_1, \cdots, j_q )$ be a tuple of
    challenge client index. For all $j_i \in I$ such that $(j_i, *) \notin
    S$, let $j^*_1,\cdots, j^*_{\tilde{q}} \in I$ be distinct client
    indices where $\tilde{q} \le q$. Let $\mb{SI}_\mu = \{ i \in [q] : j_i
    = j^*_\mu \}$ be an index set of same client indices where $\mu \in
    [\tilde{q}]$. In this experiment, we change the generation of the
    $\mu$-th ciphertext set with the index set $\mb{SI}_\mu$. If $\ell \le
    \mu$, the ciphertexts in the $\ell$-th ciphertext set with
    $\mb{SI}_\ell$ are changed to be random elements. Otherwise, the
    ciphertexts in the $\ell$-th ciphertext set with $\mb{SI}_\ell$ are
    generated by running the normal encryption algorithm. Note that the
    ciphertexts with the client index $j_i$ such that $(j_i, *)\in S$ in
    $\mb{H}_{0,\mu-1}$ and $\mb{H}_{0,\mu}$ are equally generated by
    running the normal encryption algorithm.
\end{description}

Without loss of generality, we assume that $(j^*_\mu, *) \notin S$. Suppose
there exists an adversary $\mc{A}$ that distinguishes $\mb{H}_{0, \mu-1}$
from $\mb{H}_{0, \mu}$ with non-negligible advantage. A simulator $\mc{B}$
that solves the mXDH assumption using $\mc{A}$ is given: a challenge tuple $D
= \big( (p, \G, \hat{\G}, \G_T, e), g, \hat{g}, g^a, \{ g^{b_{i,1}}, \cdots,
g^{b_{i,2n}} \}_{i \in [t]} \big)$ and $T = \big( \{ X_{i,1}, \cdots,
X_{i,2n} \}_{i \in [t]} \big)$. $\mc{B}$ interacts with $\mc{A}$ as follows.

Let $( st_{\mc{A}}, S, ((j_1, m_1), \cdots, (j_q, m_q)) )$ be the output of
$\mc{A}$ and $\mb{SI}_\mu$ be the target index set of $j^*_\mu$. The
simulator $\mc{B}$ first sets the secret keys of all clients except the
target client. For each $j \neq j^*_\mu$, it chooses a random exponent $s_j$
and sets $SK_j = s_j$. For the target client index $j^*_\mu$, it implicitly
sets $SK_{j^*_\mu} = a$. Now, $\mc{B}$ can generate any comparison key
$CK_{j,k}$ for all tuple $(j,k) \in S$ since it knows secret keys $s_j$ and
$s_k$ if $(j,k) \in S$.

To handle hash queries, $\mc{B}$ maintains a random oracle table $T_H$ for
the consistency of a simulation.
Initially, $\mc{B}$ fixes some hash queries for the simulation of the
ciphertext with the challenge tuple $(j_i, m_i)$ such that $i \in
\mb{SI}_\mu$, which is output of $\mc{A}$. For the first message $m_1$,
$\mc{B}$ sets $h_{k,0} = g^{b_{1, 2k-1}}, h_{k,1} = g^{b_{1,2k}}$ and adds
the tuples $(E_0(k, m_1), h_{k,0})$ and $(E_1(k, m_1), h_{k,1})$ to the table
$T_H$ for all $k \in [n]$.
For each message $m_i$, $\mc{B}$ first finds the biggest index $d =
\mb{ind}(m_i, m_{i'})$ for any $i' < i$ and finds tuples $(E_0(k, m_{i'}),
h'_{k,0}), (E_1(k, m_{i'}), h'_{k,1})$ from $T_H$ for all $k\in [d]$. It sets
$h_{k,0} = h'_{k,0}, h_{k,1} = h'_{k,1}$ for all $k \in [d-1]$ since $E_0(k,
m_{i'}) = E_0(k, m_i)$ and $E_1(k, m_{i'}) = E_1(k, m_i)$. If $\mb{cmp}(m_i,
m_{i'}) = 1$, then $\mc{B}$ sets $h_{d,0} = g^{b_{i, 2d-1}}, h_{d,1} =
h'_{d,0}$ and otherwise, it sets $h_{d,0} = h'_{d,1}, h_{d,1} = g^{b_{i,
2d}}$. Next, it sets $h_{k,0} = g^{b_{i, 2k-1}}, h_{k,1} = g^{b_{i, 2k}}$ for
all $k \in [d+1, n]$. It adds the tuples $(E_0(k, m_i), h_{k,0})$ and
$(E_1(k, m_i), h_{k,1})$ to the table $T_H$ for all $k \in [n]$.
After that, if a random oracle query for an encoded message $E_\beta(k,m)$ is
requested for each $\beta \in \bits$, $\mc{B}$ first finds a tuple
$(E_\beta(k, m), h)$ on the table $T_H$. If the tuple does not exist, then it
chooses a random element $h \in \G$ and adds the tuple $(E_\beta(k,m), h)$ to
$T_H$. Finally it gives $h$ to $\mc{A}$ as a response.

To handle the creation of ciphertexts, $\mc{B}$ carefully uses the hash table
and the challenge elements in the assumption. Let $((j_1, m_1), \cdots, (j_q,
m_q))$ be the challenge tuples. If $(j_i, *) \in S$, then $\mc{B}$ simply
creates a ciphertext by running the \tb{MC-ORE.Encrypt} algorithm with hash
queries since it knows the secret key $s_{j_i}$. If $(j_i, *) \notin S$, then
it means that $i \in \mb{SI}_{\ell}$ for some $\ell \in [\tilde{q}]$.
$\mc{B}$ creates a set of ciphertexts with the index set $\mb{SI}_\ell$ for
each $\ell \in [\tilde{q}]$ as follows:
\begin{itemize}
\item \tb{Case} $\ell < \mu$: $\mc{B}$ creates the ciphertext for the index
    $i \in \mb{SI}_\ell$ sequentially.
    For the smallest index $i \in \mb{SI}_\ell$, it chooses random elements
    $R_{k,0}, R_{k,1} \in \G$ for all $k \in [n]$ and creates $CT_{j_i} = (
    \{ R_{k,0}, R_{k,1} \}_{k \in [n]} )$.
    For the next index $i$, it first finds the biggest index $d =
    \mb{ind}(m_i, m_{i'})$ for any $i' < i$. It sets $C_{k,0} = C'_{k,0},
    C_{k,1} = C'_{k,1}$ for all $k \in [d-1]$ where $CT_{j_{i'}} = ( \{
    C'_{k,0}, C'_{k,1} \}_{k \in [n]} )$. If $\mb{cmp}(m_i, m_{i'}) = 1$,
    then it chooses a random element $R_{d,0}\in \G$ and sets $C_{d,0} =
    R_{d,0}, C_{d,1} = C'_{d,0}$. Otherwise, it chooses a random element
    $R_{d,1}\in \G$ and sets $C_{d,0} = C'_{d,1}, C_{d,1} = R_{d,1}$. Next,
    it chooses random elements $R_{k,0}, R_{k,1}\in \G$ and sets $C_{k,0} =
    R_{k,0}, C_{k,1} = R_{k,1}$ for all $k \in [d+1, n]$. It creates the
    ciphertext $CT_{j_i} = ( \{ C_{k,0}, C_{k,1} \}_{k \in [n]} )$.
    At last, it creates the $\ell$-th ciphertext set $\mb{CT}_{SI_\ell} = (
    \{ CT_{j_i} \}_{i\in SI_\ell} )$.

\item \tb{Case} $\ell = \mu$: $\mc{B}$ creates the ciphertext for the index
    $i \in \mb{SI}_\mu$ sequentially.
    For the smallest index $i \in \mb{SI}_\ell$, $\mc{B}$ sets $C_{k,0} =
    X_{1, 2k-1}, C_{k,1} = X_{1, 2k}$ for all $k \in [n]$ and creates the
    ciphertext $CT_{j_i} = ( \{ C_{k,0}, C_{k,1} \}_{k \in [n]} )$.
    For the next index $i$, $\mc{B}$ first finds the biggest index $d =
    \mb{ind}(m_i, m_{i'})$ for any $i' < i$. It sets $C_{k,0} = C'_{k,0},
    C_{k,1} = C'_{k,1}$ for all $k \in [d-1]$ where $CT_{j_{i'}} = ( \{
    C'_{k,0}, C'_{k,1} \}_{k \in [n]} )$. If $\mb{cmp}(m_i, m_{i'}) = 1$,
    then it sets $C_{d,0} = X_{i, 2d-1}, C_{d,1} = C'_{d,0}$ and otherwise,
    it sets $C_{d,0} = C'_{d,1}, C_{d,1} = X_{i, 2d}$. Next, it sets
    $C_{k,0} = X_{i, 2k-1}, C_{k,1} = X_{i, 2k}$ for all $k \in [d+1, n]$.
    It creates the ciphertext $CT_{j_i} = ( \{ C_{k,0}, C_{k,1} \}_{k \in
    [n]} )$.
    At last, it creates the $\mu$-th ciphertext set $\mb{CT}_{SI_\mu} = (
    \{ CT_{j_i} \}_{i\in SI_\mu} )$. Note that it does not know the secret
    key $a$.

\item \tb{Case} $\ell > \mu$: It creates the ciphertext set
    $\mb{CT}_{SI_\ell}$ by running the \tb{MC-ORE.Encrypt} algorithm with
    hash queries.
\end{itemize}
If $T = \big( (g^{ab_{1,1}}, \cdots, g^{ab_{1,2n}}), \ldots, (g^{ab_{t,1}},
\cdots, g^{ab_{t, 2n}}) \big)$, then $\mb{CT}_{SI_\mu}$ are ciphertexts in
$\mb{H}_{0,\mu-1}$. Otherwise, $\mb{CT}_{SI_\mu}$ are ciphertexts in
$\mb{H}_{0,\mu}$. By the mXDH assumption, two experiments $\mb{H}_{0,\mu-1}$
and $\mb{H}_{0,\mu}$ are computationally indistinguishable.
\end{proof}

\begin{lemma} \label{lem:mc-ore-1-h1-h2}
The hybrid experiments $\mb{H}_1$ and $\mb{H}_2$ are computationally
indistinguishable to the polynomial-time adversary assuming that the mXDH
assumption holds.
\end{lemma}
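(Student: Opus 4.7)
My plan is to extend Lemma~\ref{lem:mc-ore-1-h0-h1}'s hybrid argument to cover the clients whose comparison keys are revealed. I would introduce sub-hybrids $\mb{H}_1 = \mb{H}_{1,0}, \ldots, \mb{H}_{1,\tilde{q}'} = \mb{H}_2$, where $\mb{H}_{1,\mu}$ replaces the ciphertexts of the $\mu$-th distinct client index $j^*_\mu \in I$ with $(j^*_\mu, *) \in S$ by random elements (subject to the prefix-reuse rule within one client and to the cross-client pairing equalities that the revealed comparison keys force), while the remaining target clients are encrypted honestly. A union bound then reduces the lemma to showing $\mb{H}_{1,\mu-1}$ and $\mb{H}_{1,\mu}$ are computationally indistinguishable for each $\mu$.

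For this single step I would construct a simulator $\mc{B}$ that breaks mXDH using a distinguisher $\mc{A}$, following the overall template of Lemma~\ref{lem:mc-ore-1-h0-h1}: implicitly identify $s_{j^*_\mu}$ with the unknown mXDH exponent $a$, program the random-oracle table for $j^*_\mu$'s encoded messages to the challenge bases $\{g^{b_{i,\mathrm{idx}}}\}$ using the same prefix-reuse bookkeeping (fresh positions for bits $k > d = \mb{ind}(m_i, m_{i'})$ and reuse for $k \le d$ with the single flip at the differing bit), and place the challenge targets $X_{i,\mathrm{idx}}$ as $j^*_\mu$'s ciphertext elements, so that $T = T_0$ reproduces $\mb{H}_{1,\mu-1}$ and $T = T_1$ reproduces $\mb{H}_{1,\mu}$. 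All other $s_k$ are sampled by $\mc{B}$ itself and used to produce those clients' ciphertexts honestly from the current hash table, which makes every cross-client pairing equality with $CT^{(j^*_\mu)}$ hold automatically in both worlds.

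The main obstacle I foresee is answering the comparison-key queries $CK_{j^*_\mu, k} = (\hat{g}^{r s_{j^*_\mu}}, \hat{g}^{r s_k})$ when $s_{j^*_\mu} = a$ is the unknown mXDH exponent: the mXDH challenge provides $g^a$ but no element of the form $\hat{g}^{\cdot a}$, so the first component of every such key is not directly readable off the challenge. My plan is to exploit the freshness of each per-query blinding $r$ and re-parametrize it so that the $a$ is absorbed into a quantity $\mc{B}$ can compute---implicitly setting $r = \tilde r / a$ for a fresh $\tilde r \in \Z_p$ makes $\hat{g}^{r s_{j^*_\mu}} = \hat{g}^{\tilde r}$ directly computable, and the second component $\hat{g}^{\tilde r s_k / a}$ is then cleared by a compensating coupling of $s_k$ to $a$ in $\mc{B}$'s view (together with a matching adjustment of $k$'s ciphertexts to stay consistent with the mXDH-programmed hash values). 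The bulk of the formal work lies in verifying that this joint parametrization of $(r, s_k)$ is distributed identically to the honest distribution and reproduces every cross-client pairing equality involving previously-randomized $j^*_\ell$ (for $\ell < \mu$) and honest $j^*_\ell$ (for $\ell > \mu$) in both the $T_0$ and $T_1$ worlds. Once this consistency is established, $\mc{A}$'s distinguishing advantage transfers directly to $\mc{B}$'s mXDH advantage, and summing over $\mu \in [\tilde q']$ closes the lemma.
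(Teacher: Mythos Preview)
Your overall plan is close to the paper's, and your intuition about ``coupling $s_k$ to $a$'' is exactly the right mechanism. The gap is in the granularity of your sub-hybrids. You propose to randomize \emph{one client index} $j^*_\mu$ at a time; this does not work once comparison keys tie clients together. Concretely, suppose $(j^*_\mu,k)\in S$ with $k$ still ``honest'' in $\mb{H}_{1,\mu}$. The pairing relation $e(C^{(j^*_\mu)}_{i,0},K_1)=e(C^{(k)}_{i,0},K_0)$ with $C^{(k)}_{i,0}=H(E_0(i,m))^{s_k}$ and $K_0=\hat g^{rs_{j^*_\mu}},K_1=\hat g^{rs_k}$ pins $C^{(j^*_\mu)}_{i,0}$ to $H(E_0(i,m))^{s_{j^*_\mu}}$, i.e.\ to the honest value, so there is no room to randomize $j^*_\mu$ alone. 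Your fix---set $s_k=a s'_k$ and ``adjust'' $k$'s ciphertexts---does let you compute the comparison key, but it forces $C^{(k)}$ to be built from the mXDH challenge targets as well: in the $T_1$ world those elements become random, so the reduction no longer lands in your intended $\mb{H}_{1,\mu}$ (where $k$ was supposed to remain honest). And the same issue propagates along every edge in $S$: if $(k,k')\in S$ you must couple $s_{k'}$ too, and so on.

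The paper handles this by making each sub-hybrid randomize an entire \emph{connected component} of the relation induced by $S$ at once (their ``co-related'' index sets $\mb{RI}_\mu$). For every client $j$ in the target component it implicitly sets $SK_j=a\cdot s_j$ with $s_j$ known; then any requested $CK_{j,k}$ has \emph{both} endpoints in the component, so $(\hat g^{r\cdot a s_j},\hat g^{r\cdot a s_k})$ is distributed as $(\hat g^{r' s_j},\hat g^{r' s_k})$ for a fresh $r'$, which $\mc{B}$ can output directly. Ciphertexts inside the component are set to $X_{i,\ell}^{\,s_j}$ (using the mXDH targets and the known $s_j$), which simultaneously (i) equals the honest $H(\cdot)^{as_j}$ in the $T_0$ world and (ii) preserves all intra-component pairing equalities even in the $T_1$ world, because the common base $X_{i,\ell}$ cancels under the pairing. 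There are, by definition of a connected component, no comparison keys crossing to other components, so no cross-component equalities need to be maintained. If you rewrite your sub-hybrids at this component granularity, your plan goes through essentially verbatim and matches the paper.
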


\begin{proof}
We additionally define a sequence of hybrid experiments $\mb{H}_1 =
\mb{H}_{1,0}, \mb{H}_{1,1}, \cdots, \mb{H}_{1,\tilde{q}} = \mb{H}_2$ for some
$\tilde{q}$ as follows.
\begin{description}
\item [$\mb{H}_{1, \mu}$]: Let $I = ( j_1, \cdots, j_q )$ be a tuple of
    challenge client index and let $j, j' \in I$ be co-related indices if
    $(j,j')\in S$ or there exist $\{ k_i \}_{i\in[n]} \subseteq I$ such
    that $(j, k_1), (k_1, k_2), \cdots, (k_{n-1}, k_n), (k_n, j') \in S$
    for any $n \in [ q-2]$. Let $\mb{RI}_\mu = \{ i \in [q] : j_i\text{s
    are co-related indices} \}$ be an index set of co-related client
    indices where $\mu \in [\tilde{q}]$. In this experiment, we change the
    generation of the $\mu$-th ciphertext set with the index set
    $\mb{RI}_\mu$. If $\ell \le \mu$, the ciphertexts in the $\ell$-th
    ciphertext set with $\mb{RI}_\ell$ are changed to be random elements.
    Otherwise, the ciphertexts in the $\ell$-th ciphertext set with
    $\mb{RI}_\ell$ are generated by running the normal encryption
    algorithm. Note that the ciphertexts with the client index $j_i$ such
    that $(j_i, *)\notin S$ in $\mb{H}_{1,\mu-1}$ and $\mb{H}_{1,\mu}$ are
    equally generated by using random elements.
\end{description}

Suppose there exists an adversary $\mc{A}$ that distinguishes $\mb{H}_{1,
\mu-1}$ from $\mb{H}_{1, \mu}$ with non-negligible advantage. A simulator
$\mc{B}$ that solves the mXDH assumption using $\mc{A}$ is given: a challenge
tuple $D = \big( (p, \G, \hat{\G}, \G_T, e), g,\lb \hat{g}, g^a, \{
g^{b_{i,1}}, \cdots, g^{b_{i,2n}} \}_{i \in [t]} \big)$ and $T = \big( \{
X_{i,1}, \cdots, X_{i,2n} \}_{i \in [t]} \big)$. $\mc{B}$ interacts with
$\mc{A}$ as follows.

Let $( st_{\mc{A}}, S, ((j_1, m_1), \cdots, (j_q, m_q)) )$ be the output of
$\mc{A}$ and $\mb{RI}_\mu$ be the target index set. The simulator $\mc{B}$
first sets the secret keys of clients as follows. For each $j = j_i$, if $i
\notin \mb{RI}_\mu$, it chooses a random exponent $s_j \in \Z_p$ and sets
$SK_j = s_j$. Otherwise, it chooses a random exponent $s_j \in \Z_p$ and
implicitly sets $SK_j = as_j$. Then, $\mc{B}$ can generate a comparison key
$CK_{j,k} = ( \hat{g}^{r s_j}, \hat{g}^{r s_k} )$ for each tuple $(j,k) \in
S$ with the help of a random exponent $r \in \Z_p$, though it does not know
$a$.

To handle hash queries, $\mc{B}$ maintains a random oracle table $T_H$ for
the consistency of a simulation. This simulation is same to the proof of the
Theorem \ref{lem:mc-ore-1-h0-h1} except that $\mc{B}$ fixes some hash queries
for the simulation of the ciphertext with the challenge tuple $(j_i, m_i)$
such that $i \in \mb{RI}_\mu$.

To handle the creation of ciphertexts, $\mc{B}$ carefully uses the hash table
and the challenge elements in the assumption. Let $((j_1, m_1), \cdots, (j_q,
m_q))$ be the challenge tuples. If $(j_i, *) \notin S$, then $\mc{B}$ creates
a ciphertext by using random elements as in $\mb{H}_{1,\mu-1}$. If $(j_i, *)
\in S$, then it means that $i \in \mb{RI}_\ell$ for some $\ell \in
[\tilde{q}]$. $\mc{B}$ creates a set of ciphertexts with the index set
$\mb{RI}_\ell$ for each $\ell \in [\tilde{q}]$ as follows:

\begin{itemize}
\item \tb{Case} $\ell < \mu$: $\mc{B}$ creates the ciphertext for the index
    $i \in \mb{RI}_\ell$ sequentially.
    For the smallest index $i \in \mb{RI}_\ell$, it chooses random elements
    $R_{k,0}, R_{k,1} \in \G$ and computes $C_{k,0} = R_{k,0}^{s_{j_i}},
    C_{k,1} = R_{k,1}^{s_{j_i}}$ for all $k \in [n]$. It creates $CT_{j_i}
    = ( \{ C_{k,0}, C_{k,1} \}_{k \in [n]} )$.
    For the next index $i$, $\mc{B}$ first finds the biggest index $d =
    \mb{ind}(m_i, m_{i'})$ for any $i' < i$ and computes $s = s_{j_i} /
    s_{j_{i'}}$. It computes $C_{k,0} = {C'}_{k,0}^s, C_{k,1} =
    {C'}_{k,1}^s$ for all $k \in [d-1]$ where $CT_{j_{i'}} = ( \{ C'_{k,0},
    C'_{k,1} \}_{k \in [n]} )$. If $\mb{cmp}(m_i, m_{i'}) = 1$, then it
    chooses a random element $R_{d,0}\in \G$ and computes $C_{d,0} =
    R_{d,0}^{s_{j_i}}, C_{d,1} = {C'}_{d,0}^s$. Otherwise, it chooses a
    random element $R_{d,1}\in \G$ and computes $C_{d,0} = {C'}_{d,1}^s,
    C_{d,1} = R_{d,1}^{s_{j_i}}$. Next, it chooses random elements
    $R_{k,0}, R_{k,1}\in \G$ and computes $C_{k,0} = R_{k,0}^{s_{j_i}},
    C_{k,1} = R_{k,1}^{s_{j_i}}$ for all $k \in [d+1, n]$. It creates the
    ciphertext $CT_{j_i} = ( \{ C_{k,0}, C_{k,1} \}_{k \in [n]} )$.
    At last, it creates the $\ell$-th ciphertext set $\mb{CT}_{RI_\ell} =
    \{ CT_{j_i} \}_{i\in RI_\ell}$.

\item \tb{Case} $\ell = \mu$: $\mc{B}$ creates the ciphertext for the index
    $i \in \mb{RI}_\mu$ sequentially.
    For the smallest index $i \in \mb{RI}_\mu$, $\mc{B}$ computes $C_{k,0}
    = X_{1,2k-1}^{s_{j_i}}, C_{k,1} = X_{1,2k}^{s_{j_i}}$ for all $k \in
    [n]$ and creates the ciphertext $CT_{j_i} = ( \{ C_{k,0}, C_{k,1} \}_{k
    \in [n]} )$.
    For the next index $i$, $\mc{B}$ first finds the biggest index $d =
    \mb{ind}(m_i, m_{i'})$ for any $i' < i$ and it computes $s = s_{j_i} /
    s_{j_{i'}}$. It computes $C_{k,0} = {C'}_{k,0}^s, C_{k,1} =
    {C'}_{k,1}^s$ for all $k \in [d-1]$ where $CT_{j_{i'}} = ( \{ C'_{k,0},
    C'_{k,1} \}_{k \in [n]} )$. If $\mb{cmp}(m_t, m_{t'}) = 1$, then
    $\mc{B}$ computes $C_{d,0} = X_{i, 2d-1}^{s_{j_i}}, C_{d,1} =
    {C'}_{d,0}^s$ and otherwise, it computes $C_{d,0} = {C'}_{d,1}^s,
    C_{d,1} = X_{i,2d}^{s_{j_i}}$. Next, it computes $C_{k,0} = X_{i,
    2k-1}^{s_{j_i}}, C_{k,1} = X_{i, 2k}^{s_{j_i}}$ for all $k \in [d+1,
    n]$. Then, it creates the ciphertext $CT_{j_i} = ( \{ C_{k,0}, C_{k,1}
    \}_{k \in [n]} )$.
    At last, it creates the $\mu$-th ciphertext set $\mb{CT}_{RI_\mu} = \{
    CT_{j_i} \}_{i \in RI_\mu}$. Note that it does not know the secret key
    $a$.

\item \tb{Case} $\ell > \mu$: It creates the ciphertext set
    $\mb{CT}_{RI_\ell}$ by running the \tb{MC-ORE.Encrypt} algorithm with
    hash queries.

\end{itemize}
If $T = \big( (g^{ab_{1,1}}, \cdots, g^{ab_{1,2n}}), \ldots, (g^{ab_{t,1}},
\cdots, g^{ab_{t, 2n}}) \big)$, then $\mb{CT}_{RI_\mu}$ are ciphertexts in
$\mb{H}_{1,\mu-1}$. Otherwise, $\mb{CT}_{RI_\mu}$ are ciphertexts in
$\mb{H}_{1,\mu}$. By the mXDH assumption, two experiments $\mb{H}_{1,\mu-1}$
and $\mb{H}_{1,\mu}$ are computationally indistinguishable.
\end{proof}

\begin{lemma} \label{lem:mc-ore-1-h2-h3}
The hybrid experiments $\mb{H}_2$ and $\mb{H}_3$ are indistinguishable to the
polynomial-time adversary with the leakage function $\mc{L}_S$ in the random
oracle model.
\end{lemma}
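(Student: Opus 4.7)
The plan is to build a simulator $\mc{B}$ that, given only the leakage $\mc{L}_S((j_1,m_1),\ldots,(j_i,m_i))$ at the $i$-th step, produces ciphertexts whose joint distribution matches that of $\mb{H}_2$. The guiding observation is that in $\mb{H}_2$ every ciphertext coordinate is either a freshly sampled uniform element of $\G$ or is forced to equal a coordinate of a previously generated ciphertext in the same co-related class, and the pattern of ``forced equalities'' depends on the plaintexts only through $\mb{cmp}$ and $\mb{ind}$ on co-related pairs --- exactly what $\mc{L}_S$ exposes.

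First I would have $\mc{B}$ compute, from $S$ and the announced client indices alone, the co-related equivalence classes used in Lemmas~\ref{lem:mc-ore-1-h0-h1} and~\ref{lem:mc-ore-1-h1-h2}. On receiving the $i$-th query, if no earlier index lies in the same class, $\mc{B}$ outputs $2n$ independent uniform elements of $\G$ as $CT_{j_i}$. Otherwise, $\mc{B}$ consults $\mc{L}_S$ to read the pairs $(\mb{cmp}(m_{i'},m_i),\mb{ind}(m_{i'},m_i))$ for every earlier $i'$ in the same class, selects $i^\star$ maximizing $d=\mb{ind}(m_{i^\star},m_i)$, and follows the $\mb{H}_2$ template: copy the first $d-1$ coordinate pairs from $CT_{j_{i^\star}}$; then, branching on $\mb{cmp}(m_{i^\star},m_i)$, fix one of $C_{d,0},C_{d,1}$ to equal the appropriate coordinate of $CT_{j_{i^\star}}$ while sampling the other freshly; and sample all remaining coordinates for $k>d$ uniformly at random. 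Random-oracle queries are handled by lazy sampling into a table $T_H$ of independent uniform elements. Because in $\mb{H}_2$ every ciphertext element is already uniformly random and functionally independent of any $H(\cdot)$ value, the hash oracle can be simulated in complete isolation from the ciphertext construction and no consistency constraint between $T_H$ and $(CT_{j_1},\ldots,CT_{j_q})$ ever needs to be enforced.

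I would then prove by induction on $i$ that the joint distribution $\mc{B}$ produces coincides, up to a negligible collision probability $O(q^2 n^2/p)$ between independently sampled elements of $\G$, with the one induced by $\mb{H}_2$. The part I expect to require the most care is the transitive case: two indices may belong to the same co-related class in $\mb{H}_2$ via a chain of pairs in $S$, while $\mc{L}_S$ records $\mb{ind}$ only for the directly adjacent links of the chain. Here one must argue that any ``forced equality'' $\mb{H}_2$ imposes between two such endpoint indices is a relation the adversary cannot test, since doing so would require a comparison key that is not in $S$; consequently the two experiments agree on every relation the adversary can actually probe, even if the raw ciphertexts differ on coordinates the adversary has no handle on. Combining this adversary-view argument with the routine collision bound yields the claimed statistical indistinguishability of $\mb{H}_2$ and $\mb{H}_3$ in the random oracle model.
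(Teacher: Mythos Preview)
Your proposal has a genuine gap: it omits the generation of comparison keys and, relatedly, uses the wrong consistency rule for ciphertexts of co-related clients with \emph{different} indices. In the ideal experiment the simulator must hand the adversary $CK_{j,k}$ for every $(j,k)\in S$, and the adversary will then test equalities of the form $e(C_{i,0},K_1)\stackrel{?}{=}e(C'_{i,0},K_0)$. If you literally ``copy the first $d-1$ coordinate pairs from $CT_{j_{i^\star}}$'' when $j_{i^\star}\neq j_i$, these pairing checks fail: with $CK_{j,k}=(\hat g^{rs_j},\hat g^{rs_k})$ and $C=C'$ you get $e(C,\hat g)^{rs_k}\neq e(C,\hat g)^{rs_j}$. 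Thus your simulated transcript is immediately distinguishable from $\mb{H}_2$, where the ``random'' ciphertexts in an $\mb{RI}_\mu$ block are not raw copies but carry the exponent structure $C_{k,b}=c_{k,b}^{\,s_{j_i}}$ (see the $\ell<\mu$ case in the proof of Lemma~\ref{lem:mc-ore-1-h1-h2}).

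The fix the paper uses is exactly this: the simulator picks secret keys $s_j$ itself, issues genuine comparison keys $CK_{j,k}=(\hat g^{rs_j},\hat g^{rs_k})$, and for each co-related block stores \emph{base} elements $c_{k,b}$ in a table $T_{CT}$, outputting $C_{k,b}=c_{k,b}^{\,s_{j_i}}$ as the ciphertext. Equalities between base elements (dictated by $\mb{ind}$ and $\mb{cmp}$ from $\mc{L}_S$) then translate into the correct pairing equalities under any revealed $CK_{j,k}$. Your ``adversary-view'' argument for the transitive case cannot rescue the direct-copy approach, because even for a \emph{single} directly revealed pair $(j,k)\in S$ the copy rule already produces a detectably wrong relation. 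Once you adopt the base-element-plus-exponent template and add the comparison-key step, the rest of your inductive plan is essentially the paper's argument.
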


\begin{proof}
Suppose there exists an adversary $\mc{A}$ that distinguishes $\mb{H}_2$ from
$\mb{H}_3$ with non-negligible advantage. We construct an efficient simulator
$\mc{B}$ for which the two distributions $\mb{H}_2$ and $\mb{H}_3$ are
statistically indistinguishable.

Let $( st_{\mc{A}}, S, ((j_1, m_1), \cdots, (j_q, m_q)) )$ be the output of
$\mc{A}$. $\mc{B}$ first outputs random public parameters $PP$ with the
initial state $st_{\mc{B}}$. It selects a random secret key $SK_j = s_j \in
\Z_p$ for each client index $j \in [N]$ and it can generate any comparison
key $CK_{j,k}$ for $(j,k) \in S$ since it knows all secret keys.

To handle hash queries, $\mc{B}$ maintains a random oracle table $T_{H}$ for
consistency of the simulation. If a random oracle query for $E_\beta(k, m)$
is requested for each $\beta \in \bits$, $\mc{B}$ first finds the tuple
$(E_\beta(k, m), h)$ from the table $T_H$. If the tuple does not exist, then
it chooses a random element $h \in \G$ and adds the tuple $(E_\beta(k,m), h)$
to $T_H$. Finally it gives $h$ to $\mc{A}$ as a response.

To handle the creation of ciphertexts, $\mc{B}$ also maintains a ciphertext
table $T_{CT}$ for consistency of the simulation. Let $I = ( j_1, \cdots, j_q
)$ be a tuple of challenge client index. For all $j_i \in I$ such that $(j_i,
*) \notin S$, let $j^*_1,\cdots, j^*_{\tilde{q}_1} \in I$ be distinct client
indices and $\mb{SI}_\mu = \{ i \in [q] : j_i = j^*_\mu \}$ be an index set
of same client indices where $\mu \in [\tilde{q}_1]$. For all $j_i \in I$
such that $(j_i, *)\in S$, let $\mb{RI}_\mu = \{ i \in [q] : j_i\text{s are
co-related indices} \}$ be an index set of co-related client indices where
$\mu \in [\tilde{q}_2]$. $\mc{B}$ simulates the creation of a set of
ciphertexts with a client index set $\mb{SI}_\mu$ or $\mb{RI}_\mu$ by using
$st_{\mc{B}}$ and $\mc{L}_S((j_1,m_1), \cdots ,(j_q,m_q))$ as follows:

\begin{itemize}
\item For the creation of the ciphertexts with each set $\mb{SI}_\mu$,
    $\mc{B}$ initiates the ciphertext table $T_{CT}$. For the smallest
    index $i \in \mb{SI}_\mu$, $\mc{B}$ chooses random elements $(c_{k,0},
    c_{k,1}) \in \G \times \G$ and sets $(C_{k,0}, C_{k,1}) = ( c_{k,0},
    c_{k,1} )$ for all $k \in [n]$. It adds the tuple $(i, (c_{1,0},
    c_{1,1}), \ldots, (c_{n,0}, c_{n,1}) )$ to $T_{CT}$ and creates
    $CT_{j_i} = ( \{ C_{k,0}, C_{k,1} \}_{k \in [n]} )$.
    For the next index $i\in \mb{SI}_\ell$, it creates the ciphertext
    sequentially as follows. It first finds the biggest index $b =
    \mb{ind}(m_i, m_{i'})$ for any $i' < i$ and then finds a tuple $(i',
    (c'_{1,0}, c'_{1,1}), \ldots, (c'_{n,0}, c'_{n,1}) )$ from the table
    $T_{CT}$. If $b = n+1$, it sets $(c_{k,0}, c_{k,1}) = (c'_{k,0},
    c'_{k,1})$ for all $k \in [n]$. If not, it proceeds the following steps:
    \begin{enumerate}
    \item It sets $(c_{k,0}, c_{k,1}) = (c'_{k,0}, c'_{k,1})$ for all $k
        \in [b-1]$.
    \item It chooses random elements $(c_{k,0}, c_{k,1}) \in \G \times
        \G$ for all $k \in [b+1, n]$.
    \item If $\mb{cmp}(m_t, m_{t'}) = 1$, it sets $c_{b,1} = c'_{b,0}$
        and chooses a random element $c_{b,0} \in \G$. Otherwise, it sets
        $c_{b,0} = c'_{b,1}$ and chooses a random element $c_{b,1} \in
        \G$.
    \end{enumerate}
    Then, $\mc{B}$ creates $CT_{j_i} = ( \{ c_{k,0}, c_{k,1} \}_{k \in [n]}
    )$ and adds the tuple $(i, (c_{1,0}, c_{1,1}), \ldots, (c_{n,0},
    c_{n,1}) )$ to $T_{CT}$. At last, it creates the ciphertext set
    $\mb{CT}_{SI_\mu} = (\{ CT_{j_i} \}_{i\in SI_\mu})$.

\item For the creation of the ciphertext with each set $\mb{RI}_\mu$,
    $\mc{B}$ initiates the ciphertext table $T_{CT}$. For the smallest
    index $i \in \mb{RI}_\mu$, $\mc{B}$ chooses random elements $(c_{k,0},
    c_{k,1}) \in \G \times \G$ and computes $(C_{k,0}, C_{k,1}) = (
    c_{k,0}^{s_{j_i}}, c_{k,1}^{s_{j_i}} )$ for all $k \in [n]$. It adds
    the tuple $(i, (c_{1,0}, c_{1,1}), \ldots, (c_{n,0}, c_{n,1}) )$ to
    $T_{CT}$ and creates $CT_{j_i} = ( \{ C_{k,0}, C_{k,1} \}_{k \in [n]}
    )$.
    For the next index $i\in \mb{RI}_\mu$, it creates the ciphertext
    sequentially as follows. It first finds the biggest index $b =
    \mb{ind}(m_i, m_{i'})$ for any $i' < i$ and then finds a tuple $(i',
    (c'_{1,0}, c'_{1,1}), \ldots, (c'_{n,0}, c'_{n,1}) )$ from the table
    $T_{CT}$. If $b = n+1$, it sets $(c_{k,0}, c_{k,1}) = (c'_{k,0},
    c'_{k,1})$ for all $k \in [n]$. If not, it proceeds the steps $1) - 3)$
    described in the creation of the $\mb{CT}_{SI_\mu}$. Then, $\mc{B}$
    computes $(C_{k,0}, C_{k,1}) = ( c_{k,0}^{s_{j_i}}, c_{k,1}^{s_{j_i}})$
    for all $k\in [n]$. It creates $CT_{j_i} = ( \{ C_{k,0}, C_{k,1} \}_{k
    \in [n]} )$ and adds the tuple $(i, (c_{1,0}, c_{1,1}), \ldots,
    (c_{n,0}, c_{n,1}) )$ to $T_{CT}$. At last, it creates the ciphertext
    set $\mb{CT}_{RI_\mu} = (\{ CT_{j_i} \}_{i\in RI_\mu})$.
\end{itemize}

\vs \noindent \tb{Correctness of the Simulation.} To show the correctness of
the simulation, we prove that the distributions $\big( (\mb{CT}_{SI_1},
\ldots, \mb{CT}_{SI_{\tilde{q}_1}}), (\mb{CT}_{RI_1}, \ldots,
\mb{CT}_{RI_{\tilde{q}_2}}) \big)$ and $\big( (\overline{\mb{CT}}_{SI_1},
\ldots, \overline{\mb{CT}}_{SI_{\tilde{q}_1}}), (\overline{\mb{CT}}_{RI_1},
\ldots, \overline{\mb{CT}}_{RI_{\tilde{q}_2}}) \big)$ of the ciphertexts
output in $H_2$ and $H_3$ are statistically indistinguishable and the outputs
of random oracle are properly simulated. We have to show that the following
conditions hold.
\begin{itemize}
\item $\forall \ell \in [\tilde{q}_1], \forall \ell' \in [\tilde{q}_2]$,
    $\mb{CT}_{SI_\ell}$ and $\mb{CT}_{RI_{\ell'}}$ are distributed
    independently.

\item $\forall \ell \in [\tilde{q}_1], \forall \ell' \in [\tilde{q}_2]$,
    $\mb{CT}_{SI_\ell} \equiv \overline{\mb{CT}}_{SI_\ell}$ and
    $\mb{CT}_{RI_{\ell'}} \equiv \overline{\mb{CT}}_{RI_{\ell'}}$.
\end{itemize}
The first condition is simply proved since each ciphertext for $SI_\ell$ and
$RI_{\ell'}$ are simulated independently. Next, we use induction to prove
that the second condition holds as follows.

\begin{itemize}
\item For each $\ell \in [\tilde{q}_1]$, let $\mb{CT}_{SI_\ell} = ( CT_1,
    \cdots, CT_t )$ and $\overline{\mb{CT}}_{SI_\ell} = ( \overline{CT}_1,
    \cdots, \overline{CT}_t )$. Obviously, the statement is true for $i =
    1$. Assume that it is true for $i-1$ and we must prove that $(CT_1,
    \ldots, CT_i) \equiv (\overline{CT}_1, \ldots, \overline{CT}_i)$.

    Suppose that $CT_i, CT_{i'}$ are the ciphertexts of $m, m'$ where $i' <
    i$. For the biggest index $b = \mb{ind}(m, m')$, if $b = n+1$, then
    $CT_i$ and $CT_{i'}$ are the ciphertexts of the same message. In the
    simulation, $\mc{B}$ finds the tuple $(-, (c'_{1,0}, c'_{1,1}), \ldots,
    (c'_{n,0}, c'_{n,1}))$ from the table $T_{CT}$ and uses it to simulate
    the ciphertext $CT_i$ by setting $(C_{k,0}, C_{k,1}) = ( {c'_{k,0}},
    {c'_{k,1}} )$ for all $k \in [n]$. Then, we have
        $$C_{k,0} = c_{k,0} = c'_{k,0} = C'_{k,0}~~\forall k\in [n].$$
    Otherwise, $m$ and $m'$ may have the same prefix of the length $b-1$.
    For $k \in [b-1]$, $(C_{k,0}, C_{k,1})$ has been simulated as the
    previous case and for $k \in [b+1, n]$, $(C_{k,0}, C_{k,1})$ has been
    simulated by using random elements. For the remain part $(C_{b,0},
    C_{b,1})$, $\mc{B}$ simulates $c_{b,1} = c'_{b,0}$ if $\mb{cmp}(m, m')
    = 1$. Then we have
            $$ C'_{b,0} = c'_{b,0} = c_{b,1} = C_{b,1}.$$
    Since we assumed that $CT_{i'}$ and $\overline{CT}_{i'}$ are
    identically distributed, by induction, $CT_i$ and $\overline{CT}_i$ are
    identically distributed.

\item For each $\ell \in [\tilde{q}_2]$, let $\mb{CT}_{RI_\ell} = ( CT_1,
    \cdots, CT_t )$ and $\overline{\mb{CT}}_{RI_\ell} = ( \overline{CT}_1,
    \cdots, \overline{CT}_t )$. Obviously, the statement is true for $i =
    1$. Assume that it is true for $i-1$ and we must prove that $(CT_1,
    \ldots, CT_i) \equiv (\overline{CT}_1, \ldots, \overline{CT}_i)$.

    Suppose that $CT_i, CT_{i'}$ are the ciphertexts of $(j, m), (j', m')$
    where $i' < i$. For the biggest index $b = \mb{ind}(m, m')$, if $b =
    n+1$, then $CT_j$ and $CT_{j'}$ are the ciphertexts of the same
    message. In the simulation, $\mc{B}$ finds the tuple $(-, (c'_{1,0},
    c'_{1,1}), \ldots, (c'_{n,0}, c'_{n,1}))$ from the table $T_{CT}$ and
    uses it to simulate the ciphertext $CT_i$ by computing $(C_{k,0},
    C_{k,1}) = ( {c'_{k,0}}^{s_j}, {c'_{k,1}}^{s_j} )$ for all $k \in [n]$.
    Let $CK_{j,j'} = (K_0, K_1)$ and we have
    \begin{align*}
    e(C_{k,0}, K_0) = e({c_{k,0}}^{s_j}, K_0) = e({c'_{k,0}}^{s_j}, K_0)
                    = e({c'_{k,0}}^{s_{j'}}, K_1) = e(C'_{k,0}, K_1) ~~\forall k\in [n].
    \end{align*}
    Otherwise, $m$ and $m'$ may have the same prefix of the length $b-1$.
    For $k \in [b-1]$, $(C_{k,0}, C_{k,1})$ has been simulated as the
    previous case and for $k \in [b+1, n]$, $(C_{k,0}, C_{k,1})$ has been
    simulated by using random elements. For the remain part $(C_{b,0},
    C_{b,1})$, $\mc{B}$ simulates $c_{b,1} = c'_{b,0}$ if $\mb{cmp}(m, m')
    = 1$. Then we have
    \begin{align*}
    e(C_{b,1}, K_0) = e({c_{b,1}}^{s_j}, K_0) = e({c'_{b,0}}^{s_j}, K_0)
                    = e({c'_{b,0}}^{s_{j'}}, K_1) = e(C'_{b,0}, K_1).
    \end{align*}
    Since we assumed that $CT_{i'}$ and $\overline{CT}_{i'}$ are
    identically distributed, by induction, $CT_i$ and $\overline{CT}_i$ are
    identically distributed.
\end{itemize}

In addition, suppose that the tuple $(E_0(k, m), h)$ is in $T_H$ and $(i,
(c_{1,0}, c_{1,1}),\ldots,(c_{n,0}, c_{n,1}))$ is in $T_{CT}$ for some $i$
such that $m_i = m$. By the Lemmas \ref{lem:mc-ore-1-h0-h1} and
\ref{lem:mc-ore-1-h1-h2}, $\mc{A}$ can not find out that $h$ and $c_{k,0}$
are different. This completes the correctness of simulation.
\end{proof}

\subsection{Extensions}

We present several extensions of our basic MC-ORE scheme to overcome their
shortcomings.

\vs\noindent \tb{Reducing Trust on the Center.} The basic MC-ORE scheme has
the problem that a center should be fully trusted because it generates the
secret keys of individual clients and comparison keys of different clients.
The existence of a trusted center is very strong constraint and it is costly
to ensure the security of such a center in reality. One way to reduce trust
on the center is that each client himself selects a secret key and securely
transfers the corresponding information to the center instead of having the
center owns the secret keys. That is, each client chooses its secret key
$s_j$ and securely sends $\hat{g}^{s_j}$ to the center, and then the center
can generate a comparison key $CK = ( (\hat{g}^{s_j})^r, (\hat{g}^{s_k})^r )$
by using $\hat{g}^{s_j}, \hat{g}^{s_k}$ received from clients and a random
exponent $r$. In this case, the center only can generate comparison keys, but
it can not generate client's ciphertexts since it does not have the secret
keys of individual clients.

\vs\noindent \tb{Removing the Trusted Center.} Unlike the previous ORE
schemes, our basic MC-ORE scheme requires a center to generate secret keys of
individual clients and comparison keys between different clients. Although we
suggested a method to reduce trust on the center, we cannot remove the
ability of the center to generate comparison keys. Note that if a comparison
key is exposed, a malicious client can compare any ciphertexts between two
clients by using the exposed comparison key. One idea to securely generate a
comparison key even after the center is completely removed is that two
clients perform a cryptographic protocol to share the same random value
$\hat{g}^r$ which is used to create $\hat{g}^{r s_j}$ and $\hat{g}^{r s_k}$.
The simplest way to non-interactively share the random value is to use a hash
function. That is, two clients with indices $j$ and $k$ generate $H(j \|
k)^{s_j}$ and $H(j \| k)^{s_k}$ respectively, and transmit these values to a
third client. Note that these values are a valid comparison key since $H(j \|
k)$ corresponds to $\hat{g}^r$ for some random exponent $r$.

\section{Enhanced MC-ORE Construction}

In this section, we propose our second construction of MC-ORE with reduced
leakage and prove the ST-SIM security of our scheme.

\subsection{Construction}

In the basic MC-ORE scheme, both ciphertext comparisons in a single client
and between different clients leak the most significant differing bit as well
as the result of the comparison. Although there are some ORE schemes with
reduced leakage \cite{LewiW16,CashLOZ16}, it is difficult to extend those
schemes to support comparisons on ciphertexts generated by different clients.
To build an MC-ORE scheme with reduced leakage, we divide the ciphertext into
independent two parts such that the first part only supports ciphertext
comparisons in a single client, and the second part only supports ciphertext
comparisons between different clients. For the first part, we use any ORE
scheme with reduced leakage. For the second part, we construct an encrypted
ORE (EORE) scheme by modifying our basic MC-ORE scheme so that it can not be
used for ciphertext comparisons in a single client. If the second part has no
leakage until a comparison key is provided, only the reduced leakage of the
ORE scheme affects the overall leakage.

\vs\noindent \tb{Encrypted ORE.} We first construct an EORE scheme by
modifying our basic MC-ORE scheme. The syntax of EORE is very similar to that
of MC-ORE defined in Definition \ref{def:mc-ore} except that the comparison
algorithm is excluded. The ciphertext of the EORE scheme is created by first
generating a ciphertext of the basic MC-ORE scheme and then encrypting it
with a public-key encryption scheme. The comparison key of the EORE scheme
includes additional elements that decrypt the encrypted ciphertext to obtain
the comparison form of the basic MC-ORE scheme. The ciphertext comparison is
performed in a similar manner to the basic MC-ORE scheme.

Let $S = \{(j,k)\}_{j,k \in [N]}$ be a set of index tuples where the
comparison key $CK_{j,k}$ is revealed. A leakage function $\mc{L}_S^{EORE}$
is defined as follows:
    \begin{align*}
    \mc{L}_S^{EORE} \big( (j_1, m_1), \cdots, (j_q, m_q) \big)
        = \big\{ \tb{cmp}(m_{i'}, m_{i}), \tb{ind}(m_{i'}, m_{i}):
          1 \le i' < i \le q, (j_{i'}, j_{i}) \in S \big\}.
    \end{align*}
Our EORE scheme with leakage $\mc{L}_S^{EORE}$ is given as follows:

\begin{description}
\item [\tb{EORE.Setup}($1^\lambda, N$).] This algorithm first generates
    bilinear groups $\G, \hat{\G}, \G_T$ of prime order $p$ with group
    generators $g \in \G$ and $\hat{g} \in \hat{\G}$. It chooses random
    exponents $s_j, a_j \in \Z_p$ and computes $h_j = g^{a_j}$ and
    $\hat{h}_j = \hat{g}^{a_j}$ for all $j\in[N]$. It outputs a master
    key $MK = \big( \{ s_j, \hat{h}_j \}_{j\in [N]} \big)$ and public
    parameters $PP = \big( (p, \G, \hat{\G}, \G_T, e), g, \hat{g},
    \{h_j\}_{j\in [N]}, H \big)$ where $H:\bits^* \to \G$ is a
    full-domain hash function.

\item [\tb{EORE.GenKey}($j, MK, PP$).] Let $MK = (\{ s_1, \cdots, s_N
    \}, \{\hat{h}_1, \cdots, \hat{h}_N \})$. It outputs a secret key
    $SK_j = s_j$.

\item [\tb{EORE.Encrypt}($m, SK_j, PP$).] Let $m = x_1 x_2 \cdots x_n \in
    \bits^n$ and $SK_j = s_j$.
    For each $i \in [n]$, it computes $F_{i,0} = H( \mb{prefix}(m, i) \| 0
    x_i )^{s_j}$ and $F_{i,1} = H( \mb{prefix}(m, i) \| (0 x_i + 1)
    )^{s_j}$. For each $F_{i,b}$, it selects a random exponent $t \in \Z_p$
    and computes $C_{i,b,0} = F_{i,b} h_j^t$ and $C_{i,b,1} = g^t$. It
    outputs a ciphertext $CT_j = \big( \{ C_{i,b,0}, C_{i,b,1} \}_{i \in
    [n], b \in \bits} \big)$.

\item [\tb{EORE.GenCmpKey}($j, k, MK, PP$).] Let $s_j$ and $s_k$ be the
    secret keys of client indices $j$ and $k$.
    It chooses a random exponent $r \in \Z_p$ and computes $K_{0,0} =
    \hat{g}^{r s_j}, K_{0,1} = \hat{h}_k^{r s_j}$ and $K_{1,0} = \hat{g}^{r
    s_k}, K_{1,1} = \hat{h}_j^{r s_k}$. It outputs the comparison key
    $CK_{j,k} = ( \{ K_{b,0}, K_{b,1} \}_{b \in \bits} )$.

\item [\tb{EORE.CompareMC}($CT_j, CT'_k, CK_{j,k}, PP$).] Let $CT_j = ( \{
    C_{i,b,0}, C_{i,b,1} \} )$ and $CT'_k = ( \{ C'_{i,b,0}, C'_{i,b,1} \}
    )$ for $i \in [n]$ and $b \in \bits$. Let $CK_{j,k} = ( \{ K_{b,0},
    K_{b,1} \}_{b\in\bits} )$.
    It first finds the smallest index $i^*$ such that $$e(C_{i^*,0,0},
    K_{1,0}) / e(C_{i^*,0,1}, K_{1,1}) \neq e(C'_{i^*,0,0}, K_{0,0}) /
    e(C'_{i^*,0,1}, K_{0,1})$$ by sequentially comparing these values from
    an index $0$ to $n$.
    If such index $i^*$ exists and $e(C_{i^*,1,0}, K_{1,0}) \lb /
    e(C_{i^*,1,1}, K_{1,1}) = e(C'_{i^*,0,0}, K_{0,0}) / e(C'_{i^*,0,1},
    K_{0,1})$ holds, then it outputs 1. If such index $i^*$ exists and
    $e(C_{i^*,0,0}, K_{1,0}) / e(C_{i^*,0,1}, K_{1,1}) = e(C'_{i^*,1,0},
    K_{0,0}) / e(C'_{i^*,1,1}, K_{0,1})$, then it outputs 0. If no such
    index $i^*$ exists, then it outputs 0.
\end{description}

\begin{remark}\label{remark:leakage-eore}
The leakage function $\mc{L}_S^{EORE}$ is same to the leakage function
$\mc{L}_S$ of the basic MC-ORE scheme except that it excludes the condition
$j_{i'} = j_i$. It means that the basic MC-ORE scheme leaks the comparison
result between ciphertexts of a single client, but the EORE scheme does not
leak any information before the comparison key is revealed.
\end{remark}

\noindent \tb{Multi-Client ORE}. Now we construct an enhanced MC-ORE scheme
by composing any ORE scheme with reduced leakage and the above EORE scheme.
As mentioned before, the ciphertext of the enhanced MC-ORE scheme consists of
two parts such that the first part is created from the ORE scheme and the
second part is created from the EORE scheme.

Let $\mc{L}_j^{ORE}$ be the leakage function of the underlying ORE scheme
corresponding to the client index $j$ and $\mc{L}_S^{EORE}$ be the leakage
function of our EORE scheme. A leakage function $\mc{L}_S^{MC\text{-}ORE}$ is
defined as follows:
    \begin{align*}
    \mc{L}_S^{MC\text{-}ORE} \big( (j_1, m_1), \cdots, (j_q, m_q) \big)
        = \big\{
        \mc{L}_j^{ORE}(m_{i_1}, \cdots, m_{i_\rho}) \cup \mc{L}_S^{EORE}:
        j = j_{i_1} = \cdots = j_{i_\rho}
        \big\}.
    \end{align*}
where the sequence sets $M_j = \{m_{i_1}, \cdots, m_{i_\rho}\} $ satisfy
$\bigcap M_j = \emptyset$ and $\bigcup M_j = \{m_1, \cdots, m_q\}$. Here, if
$S = \emptyset$, meaning that any comparison key is not revealed, then
$\mc{L}_S^{MC\text{-}ORE}$ becomes equal to the reduced leakage functions
$\{\mc{L}_j^{ORE}\}$ for each $j$. Otherwise, if $S\neq \emptyset$, to
achieve reducing the leakage, the ORE scheme is restricted from having no
leakage beyond the leakage of the EORE scheme for the same client. That is,
$\mc{L}_S^{MC\text{-}ORE}$ will be at most $\mc{L}_S^{EORE}$. Our MC-ORE
scheme with leakage $\mc{L}_S^{MC\text{-}ORE}$ that combines an ORE scheme
and our EORE scheme is described as follows:

\begin{description}
\item [\tb{MC-ORE.Setup}($1^\lambda, N$).] It obtains $MK_{EORE}$ and
    $PP_{EORE}$ by running $\tb{EORE.Setup} (1^{\lambda}, N)$ and
    outputs $MK = MK_{EORE}$ and $PP = PP_{EORE}$.

\item [\tb{MC-ORE.GenKey}($j, MK, PP$).] It runs $\tb{ORE.Setup}
    (1^{\lambda})$ and $\tb{EORE.GenKey} (j, MK, PP)$ to obtain
    $SK_{ORE,j}$ and $SK_{EORE,j}$, respectively. It outputs a secret key
    $SK_j = ( SK_{ORE,j}, SK_{EORE,j} )$.

\item [\tb{MC-ORE.Encrypt}($m, SK_j, PP$).] Let $SK_j = ( SK_{ORE,j},
    SK_{EORE,j} )$. It first obtains $OC_j$ and $EC_j$ by running
    $\tb{ORE.Encrypt} (m, SK_{ORE,j})$ and $\tb{EORE.Encrypt} (m,
    SK_{EORE,j}, PP)$ respectively. It outputs a ciphertext $CT_j = ( OC_j,
    EC_j )$.

\item [\tb{MC-ORE.Compare}($CT_j, CT'_j, PP$).] Let $CT_j = ( OC_j, EC_j )$
    and $CT'_j = ( OC'_j, EC'_j )$ for the same client index $j$. It
    returns $\tb{ORE.Compare}(OC_j, OC'_j)$.

\item [\tb{MC-ORE.GenCmpKey}($j, k, MK, PP$).] Let $SK_j$ and $SK_k$ be the
    secret keys for the client indices $j$ and $k$. It outputs the
    comparison key $CK_{j,k}$ by running $\tb{EORE.GenCmpKey}(j, k, MK,
    PP)$.

\item [\tb{MC-ORE.CompareMC}($CT_j, CT'_k, CK_{j,k}, PP$).] Let $CT_j = (
    OC_j, EC_j )$ and $CT'_k = ( OC'_k, EC'_k )$. It returns the result of
    $\tb{EORE.CompareMC} (EC_j, EC'_k, CK_{j,k}, PP)$.
\end{description}

\subsection{Correctness}

For the ciphertext comparisons in a single client, the correctness follows
from that of the underlying ORE scheme. For the ciphertext comparisons
between different clients, the correctness is shown as follows. Let $SK_j =
s_j$ and $SK_k = s_k$ be the secret keys of client indices $j$ and $k$, and
$CK_{j,k} = (K_{0,0}, K_{0,1}, K_{1,0}, K_{1,1}) = (\hat{g}^{r s_j},
\hat{h}_k^{r s_j}, \hat{g}^{r s_k}, \hat{h}_j^{r s_k})$ be the comparison key
of $(j,k)$. Let $EC_j = \big( \{ C_{i,b,0}, C_{i,b,1} \}_{i \in [n], b \in
\{0,1\}} \big)$ and $EC'_k = \big( \{ C'_{i,b,0}, C'_{i,b,1} \}_{i \in [n], b
\in \{0,1\}} \big)$ be ciphertexts on messages $m$ and $m'$. If $m < m'$,
there must be a smallest index $i^*$ such that $x_i = x'_i$ for all $i < i^*$
and $x_{i^*} \neq x'_{i^*}$. Then we have that
    \begin{align*}
    e(C_{i,0,0}, K_{1,0}) / e(C_{i,0,1}, K_{1,1})
        &= e(H(E_0(i,m))^{s_j} h_j^t, \hat{g}^{r s_k}) / e(g^t, \hat{h}_j^{r s_k})
        = e(H(E_0(i,m)), \hat{g})^{r s_j s_k} \\
        &= e(H(E_0(i,m'))^{s_k} h_k^{t'}, \hat{g}^{r s_j}) / e(g^{t'}, \hat{h}_k^{r s_j})
        = e(C'_{i,0,0}, K_{0,0}) / e(C'_{i,0,1}, K_{0,1}) ~~\forall i < i^*,\\
    e(C_{i^*,1,0}, K_{1,0}) / e(C_{i^*,1,1}, K_{1,1})
        &= e(H(E_1(i^*,m))^{s_j} h_j^t, \hat{g}^{r s_k}) / e(g^t, \hat{h}_j^{r s_k})
        = e(H(E_1(i^*,m)), \hat{g})^{r s_j s_k}\\
        &= e(H(E_0(i^*,m'))^{s_k} h_k^{t'}, \hat{g}^{r s_j}) / e(g^{t'}, \hat{h}_k^{r s_j})
        = e(C'_{i^*,0,0}, K_{0,0}) / e(C'_{i^*,0,1}, K_{0,1}).
    \end{align*}

\subsection{Security Analysis}

We now prove the security of the enhanced MC-ORE scheme with the leakage
function $\mc{L}_S^{MC\text{-}ORE}$ in the ST-SIM security model. We begin by
giving a high-level overview of the security proof. We define a sequence of
experiments from $\mb{H}_0$ corresponding to the real experiment to
$\mb{H}_4$ corresponding to the ideal experiment and show that the outputs of
two experiments are indistinguishable. Since the ciphertext is divided into
two parts: the ORE ciphertext $OC$, and the EORE ciphertext $EC$, the hybrid
experiments are also defined separately. At first, the ORE ciphertexts are
simulated only with the leakage functions $\mc{L}_j^{ORE}$. In the next
experiment, the EORE ciphertexts of clients whose comparison keys are not
exposed are randomly generated. Then, in the next experiment, the EORE
ciphertexts of clients whose comparison keys are exposed are generated with
random values. Finally, in the last experiment $\mb{H}_4$, the EORE
ciphertexts of clients whose comparison keys are exposed are simulated with
respect to the leakage function $\mc{L}_S^{EORE}$, and consequently
$\mb{H}_4$ corresponds to the ideal experiment. The details are given as
follows.

\begin{theorem} \label{thm:mc-ore-2}
The enhanced MC-ORE scheme is ST-SIM secure with the leakage function
$\mc{L}_S^{MC\text{-}ORE}$ in the random oracle model if the ORE scheme is
SIM secure with the leakage function $\mc{L}^{ORE}$, the basic MC-ORE scheme
is ST-SIM secure with the leakage function $\mc{L}_S$, and the XDH assumption
holds.
\end{theorem}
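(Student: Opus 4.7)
The plan is to follow the hybrid sequence $\mb{H}_0, \mb{H}_1, \mb{H}_2, \mb{H}_3, \mb{H}_4$ outlined before the theorem, treating the ORE component $OC_j$ and the EORE component $EC_j$ of each ciphertext separately since they are generated from independent keys. For $\mb{H}_0 \to \mb{H}_1$, I would invoke the SIM-security of the ORE scheme in a black-box way: the reduction samples every EORE quantity ($s_j, a_j, h_j, \hat{h}_j$, and the random oracle) itself, so it can honestly answer every EORE ciphertext query and every comparison-key query, and only forwards the plaintext portion of each encryption query to the external ORE challenger for the $OC_j$ slot. Any distinguishing advantage between $\mb{H}_0$ and $\mb{H}_1$ is exactly an ORE SIM-adversary's advantage with leakage $\mc{L}_j^{ORE}$.

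For $\mb{H}_1 \to \mb{H}_2$ and $\mb{H}_2 \to \mb{H}_3$, I would adapt the two XDH-based lemmas from Theorem \ref{thm:mc-ore-1} to the EORE setting. Each EORE ciphertext is the basic MC-ORE element $F_{i,b} = H(E_b(i,m))^{s_j}$ masked by an ElGamal-style randomizer $(F\, h_j^t,\, g^t)$, so the mXDH embedding used in Lemmas \ref{lem:mc-ore-1-h0-h1} and \ref{lem:mc-ore-1-h1-h2} applies directly to the $F_{i,b}$ slots, while the simulator samples the mask exponent $t$ itself and glues $(h_j^t, g^t)$ on top. The $\mb{H}_1 \to \mb{H}_2$ step handles clients whose comparison keys are never released and proceeds exactly as Lemma \ref{lem:mc-ore-1-h0-h1}. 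The $\mb{H}_2 \to \mb{H}_3$ step handles clients whose comparison keys \emph{are} released; here the reduction must additionally issue $CK_{j,k} = (\hat{g}^{r s_j}, \hat{h}_k^{r s_j}, \hat{g}^{r s_k}, \hat{h}_j^{r s_k})$ for a client whose secret $s_j$ is implicitly the mXDH exponent, which it can do with the random-self-reducibility trick of Lemma \ref{lem:mc-ore-1-h1-h2} since it chose the $a_\ell$ values itself and only needs $\hat{g}$-side exponentiations by the unknown $s_j$.

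For $\mb{H}_3 \to \mb{H}_4$, I would re-run the statistical simulation of Lemma \ref{lem:mc-ore-1-h2-h3}, restricted to pairs $(j_{i'}, j_i) \in S$ and never applied to single-client pairs $j_{i'} = j_i$; this restriction is exactly the difference between $\mc{L}_S$ and $\mc{L}_S^{EORE}$ noted in Remark \ref{remark:leakage-eore}, so single-client EORE components remain fully random. The simulator maintains the random-oracle table and the ciphertext table as before, reusing prior entries to enforce the equalities forced by the leakage at prefix positions and at the carry index $E_1(i^*, m) = E_0(i^*, m')$, and wraps each simulated $F$ in a freshly sampled ElGamal mask. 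The main obstacle is the simultaneous bookkeeping in the $\mb{H}_2 \to \mb{H}_3$ reduction: the simulator must embed the mXDH challenge in the $\G$-side $F_{i,b}$ elements, keep the random-oracle table consistent via the prefix/carry fix-up of Lemma \ref{lem:mc-ore-1-h0-h1}, and still produce a correctly distributed comparison key on $\hat{\G}$ without ever computing the target secret $s_j$. Once this embedding is worked out, combining it with the ORE reduction in the first step and the statistical argument in the last step closes the proof in parallel with Theorem \ref{thm:mc-ore-1}.
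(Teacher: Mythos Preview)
Your proposal is correct and follows the same hybrid skeleton and the same reductions as the paper: ORE simulation for $\mb{H}_0\to\mb{H}_1$, mXDH for $\mb{H}_1\to\mb{H}_2$ and $\mb{H}_2\to\mb{H}_3$, and the statistical simulator of Lemma~\ref{lem:mc-ore-1-h2-h3} (restricted to $\mc{L}_S^{EORE}$) for $\mb{H}_3\to\mb{H}_4$.

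The one place where you diverge from the paper is the $\mb{H}_1\to\mb{H}_2$ reduction. You embed the mXDH challenge in the $F_{i,b}$ slots by programming the random oracle and implicitly setting $s_{j_\mu}=a$, exactly as in Lemma~\ref{lem:mc-ore-1-h0-h1}, and then apply an honest mask $(h_j^t,g^t)$ on top. The paper instead leaves all secret keys $s_j$ known to the reduction, computes every $F_{i,b}$ honestly, and embeds the challenge in the \emph{mask} by setting $h_{j_\mu}=g^a$ and $C_{k,\beta}=(F_{k,\beta}\cdot X,\,g^{b})$. The paper's choice is slightly cleaner for this step: because each mask exponent is fresh, ciphertexts can be hybridized one query at a time without any hash-table or prefix-consistency bookkeeping, and the post-hybrid ciphertext is a genuinely uniform $\G\times\G$ pair. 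Your route works too, but the intermediate $\mb{H}_2$ it reaches is ``random $F_{i,b}$ under an honest mask'' rather than ``fully random $EC_j$''; this is still a perfectly good waypoint since the simulator in $\mb{H}_4$ also outputs honest masks over simulated $F$'s, so the chain closes. For $\mb{H}_2\to\mb{H}_3$ and $\mb{H}_3\to\mb{H}_4$ your plan coincides with the paper's, which literally invokes the simulators of Lemmas~\ref{lem:mc-ore-1-h1-h2} and~\ref{lem:mc-ore-1-h2-h3} as black boxes and lifts their output to the EORE format via the known $\alpha_j$'s.
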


\begin{proof}
We prove the security of our enhanced MC-ORE scheme through a sequence of
hybrid experiments. The first experiment is defined as the real MC-ORE
security experiment and the last one is defined as the ideal experiment with
the leakage function $\mc{L}_S^{MC\text{-}ORE}$ in which the adversary has no
advantage. The hybrid experiments $\mb{H}_0, \mb{H}_1, \mb{H}_2, \mb{H}_3$,
and $\mb{H}_4$ are defined as follows:
\begin{description}
\item [$\mb{H}_0$]: This experiment corresponds to the real world
    experiment.

\item [$\mb{H}_1$]: In this experiment, the ORE ciphertexts $OC_j$ are
    generated with the leakage function $\mc{L}_j^{ORE}$ and the rest are
    same to $\mb{H}_0$. We have that $\mb{H}_0$ and $\mb{H}_1$ are
    indistinguishable if the underlying ORE scheme is secure with respect
    to the leakage function $\mc{L}^{ORE}$.

\item [$\mb{H}_2$]: This experiment is similar to $\mb{H}_1$ except that
    the EORE ciphertext $EC_j$ such that $(j, j') \not\in S$ for any client
    index $j'$ is generated by using random elements.

\item [$\mb{H}_3$]: This experiment is similar to $\mb{H}_2$ except that
    the EORE ciphertext $EC_j$ such that $(j, j') \in S$ for some client
    indices $j'$ is generated by using random elements.

\item [$\mb{H}_4$]: In this experiment, the EORE ciphertext $EC_{j_i}$ such
    that $(j_i, j)\in S$ for some client indices $j$ is generated with the
    leakage function $\mc{L}_S^{EORE}$ and the rest are same to $\mb{H}_3$.
    This experiment corresponds to the ideal world experiment.
\end{description}

From the following Lemmas \ref{lem:mc-ore-2-h0-h1}, \ref{lem:mc-ore-2-h1-h2},
\ref{lem:mc-ore-2-h2-h3}, and \ref{lem:mc-ore-2-h3-h4} that claim the
indistinguishability of the experiments, we have that $\mb{H}_0$ and
$\mb{H}_4$ are computationally indistinguishable.
\end{proof}

\begin{lemma} \label{lem:mc-ore-2-h0-h1}
The hybrid experiments $\mb{H}_0$ and $\mb{H}_1$ are computationally
indistinguishable to the polynomial-time adversary if the underlying ORE
scheme is SIM secure with the leakage function $\mc{L}^{ORE}$.
\end{lemma}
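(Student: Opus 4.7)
The plan is to prove this by a direct reduction to the SIM-security of the underlying ORE scheme. Since the only difference between $\mb{H}_0$ and $\mb{H}_1$ lies in how the ORE component $OC_{j_i}$ of each ciphertext is generated (real encryption versus simulation from $\mc{L}_j^{ORE}$), while the EORE components $EC_{j_i}$ and the comparison keys $CK_{j,k}$ are produced identically in both experiments, it suffices to replace the real ORE ciphertexts by simulated ones and invoke the ORE security directly. Because each client holds an independent ORE secret key $SK_{ORE,j}$, I would introduce an intermediate sequence of sub-hybrids $\mb{H}_{0} = \mb{H}_{0,0}, \mb{H}_{0,1}, \ldots, \mb{H}_{0,\tilde{q}} = \mb{H}_1$ indexed by the distinct client indices $j^*_1, \ldots, j^*_{\tilde q}$ appearing in the challenge tuples, where in $\mb{H}_{0,\mu}$ the ORE ciphertexts for clients $j^*_1, \ldots, j^*_\mu$ are simulated from $\mc{L}_{j^*_\ell}^{ORE}$ and the ORE ciphertexts for the remaining clients are honestly generated.

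To show $\mb{H}_{0,\mu-1} \approx \mb{H}_{0,\mu}$, I would construct a reduction $\mc{B}$ that, given an adversary $\mc{A}$ distinguishing these sub-hybrids, breaks the SIM-security of the ORE scheme for the target client $j^*_\mu$. Upon receiving the challenge tuple $(st_{\mc{A}}, S, ((j_1,m_1),\ldots,(j_q,m_q)))$ from $\mc{A}$, the simulator $\mc{B}$ runs $\tb{EORE.Setup}$ itself to obtain $MK_{EORE}$ and $PP_{EORE}$, so it can honestly answer every EORE ciphertext query and every comparison key query $CK_{j,k}$ for $(j,k) \in S$. For each client $j^*_\ell$ with $\ell < \mu$, $\mc{B}$ invokes its internal ORE simulator on the corresponding sub-sequence of messages and the leakage $\mc{L}_{j^*_\ell}^{ORE}$ to obtain simulated $OC$ components; for each client $j^*_\ell$ with $\ell > \mu$, $\mc{B}$ runs $\tb{ORE.Setup}$ and generates real $OC$ components honestly; for the target client $j^*_\mu$, $\mc{B}$ forwards the corresponding sub-sequence of messages to its external ORE challenger and embeds the received ciphertexts into the $OC$ slots. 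Finally, $\mc{B}$ packs $(OC_{j_i}, EC_{j_i})$ into $CT_{j_i}$ and returns the full vector to $\mc{A}$, outputting whatever $\mc{A}$ outputs.

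If the external ORE challenger returned real ciphertexts, $\mc{A}$'s view is exactly that of $\mb{H}_{0,\mu-1}$; if it returned simulated ciphertexts drawn from $\mc{L}_{j^*_\mu}^{ORE}$, the view is exactly that of $\mb{H}_{0,\mu}$. Hence any non-negligible distinguishing advantage for $\mc{A}$ translates into a non-negligible SIM-advantage against the ORE scheme, and summing over $\mu \in [\tilde q]$ yields the claim via a standard hybrid argument with a polynomial loss. The main obstacle I anticipate is a notational one rather than a cryptographic one: carefully partitioning the challenge queries by client index so that each sub-sequence handed to the ORE challenger is a well-formed ORE attack (i.e., corresponds to a single-client ORE experiment) and verifying that the leakage information $\mc{L}_{j^*_\mu}^{ORE}$ needed to simulate the target client's $OC$ components is indeed contained in $\mc{L}_S^{MC\text{-}ORE}((j_1,m_1),\ldots,(j_q,m_q))$, which follows directly from the definition of $\mc{L}_S^{MC\text{-}ORE}$. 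No additional cryptographic assumption beyond the SIM-security of the ORE scheme is used in this step.
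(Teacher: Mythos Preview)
Your proposal is correct and follows the same approach as the paper: exploit the independence of the $OC_j$ and $EC_j$ parts, generate the EORE components honestly from a self-chosen master key, and replace the ORE components by simulated ones using the ORE simulator. The paper's proof is a two-sentence sketch that does not spell out the per-client hybrid you introduce, but your sub-hybrid argument over the distinct client indices is exactly the natural way to make that sketch rigorous, and it introduces no new ideas beyond what the paper implicitly assumes.
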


\begin{proof}
The proof of this lemma is simple since a ciphertext $CT_j$ consists of two
independent part $OC_j$ and $EC_j$. A simulator can use the simulator of the
ORE scheme for the generation of $OC_j$ and it can generate other elements in
$EC_j$ by the randomly chosen master key of an EORE scheme.
\end{proof}

\begin{lemma} \label{lem:mc-ore-2-h1-h2}
The hybrid experiments $\mb{H}_1$ and $\mb{H}_2$ are computationally
indistinguishable to the polynomial-time adversary assuming that the mXDH
assumption holds.
\end{lemma}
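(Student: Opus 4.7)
The plan is to follow the same sub-hybrid strategy used in Lemma~\ref{lem:mc-ore-1-h0-h1}, now applied to the EORE component of the enhanced ciphertext. Concretely, I would interpolate between $\mb{H}_1$ and $\mb{H}_2$ via a chain $\mb{H}_1=\mb{H}_{1,0},\mb{H}_{1,1},\ldots,\mb{H}_{1,\tilde q}=\mb{H}_2$, where $\mb{H}_{1,\mu}$ agrees with $\mb{H}_1$ except that for the first $\mu$ \emph{isolated} client indices $j^*_1,\ldots,j^*_\mu$ (those $j$ with $(j,*)\notin S$), every hash-image factor $F_{i,b}=H(\cdot)^{s_j}$ appearing inside $EC_j$ is replaced by a uniformly random element of $\G$ before being masked by $h_j^t$. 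The ORE part $OC_j$ was already replaced by its simulator output in $\mb{H}_1$ and remains untouched across these sub-hybrids, so the only distributional change is confined to the EORE component.

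For each single step $\mb{H}_{1,\mu-1}\to\mb{H}_{1,\mu}$, I would reduce to the mXDH assumption in direct analogy with the reduction of Lemma~\ref{lem:mc-ore-1-h0-h1}. Given a challenge $\big((p,\G,\hat{\G},\G_T,e),g,\hat g,g^a,\{g^{b_{i,k}}\}_{i\in[t],k\in[2n]},T\big)$, the simulator $\mc{B}$ implicitly sets $SK_{j^*_\mu}=a$ while sampling $s_j$ for every $j\neq j^*_\mu$ and sampling the ElGamal exponents $\{a_j\}_{j\in[N]}$ itself, including $a_{j^*_\mu}$. Because $(j^*_\mu,*)\notin S$, no comparison key involving $j^*_\mu$ is ever requested, so $\mc{B}$ never needs $\hat g^{a s_k}$ or $\hat h_k^{a s_{j^*_\mu}}$; all other comparison keys are produced normally from the $s_j$ and $\hat h_k=\hat g^{a_k}$ values that $\mc{B}$ knows. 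The random-oracle programming is inherited verbatim from Lemma~\ref{lem:mc-ore-1-h0-h1}: using $\mb{ind}$, $\mc{B}$ preassigns $g^{b_{i,2k-1}}$ and $g^{b_{i,2k}}$ to the hash values of $E_0(k,m_i)$ and $E_1(k,m_i)$, recycling earlier values on matching prefixes so that equal encodings map to identical hashes. To form the target client's EORE ciphertext, $\mc{B}$ draws a fresh $t\in\Z_p$ and outputs $C_{i,b,0}=X_{i,\cdot}\cdot h_{j^*_\mu}^{t}$ and $C_{i,b,1}=g^{t}$, which it can compute because it knows $a_{j^*_\mu}$. If $T$ is the real tuple, then $X_{i,\cdot}=g^{a b_{i,\cdot}}=H(\cdot)^{s_{j^*_\mu}}$ and the view is $\mb{H}_{1,\mu-1}$; if $T$ is random, the view is $\mb{H}_{1,\mu}$. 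Summing the distinguishing advantages over $\mu$ gives the lemma.

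The main obstacle is exactly the prefix-consistency issue resolved in Lemma~\ref{lem:mc-ore-1-h0-h1}: programming the random oracle across the encodings of distinct challenge messages of the same client so that replays on common prefixes agree with what the simulated ciphertext requires at the position determined by $\mb{ind}$. Since that bookkeeping is already in place and reusable, the only genuinely new ingredient here is the ElGamal-style masking pair $(F_{i,b}\cdot h_j^t,\,g^t)$; because $\mc{B}$ picks $a_{j^*_\mu}$ and $t$ itself, this layer is simulated trivially and contributes no additional difficulty beyond that of the basic MC-ORE proof.
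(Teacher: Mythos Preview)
Your approach is correct but differs from the paper's in a meaningful way. You embed the mXDH exponent $a$ into the \emph{secret key} $s_{j^*_\mu}$ and reuse the random-oracle programming from Lemma~\ref{lem:mc-ore-1-h0-h1}, stepping through one isolated client index at a time. The paper instead embeds $g^a$ into the \emph{ElGamal key} $h_{j_\mu}$: it sets $h_{j_\mu}=g^a$, keeps all $s_j$ known, computes every $F_{k,\beta}=H(E_\beta(k,m_i))^{s_{j_i}}$ honestly, and plants the challenge as the mask $(F_{k,\beta}\cdot X_\cdot,\,g^{b_\cdot})$. Because the masking exponent $t$ is fresh per component, each EORE ciphertext component is already independent of all others, so the paper can hybrid one \emph{ciphertext} at a time (indexing by $\mu\in[q]$) and needs no prefix bookkeeping at all.

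What each buys: the paper's embedding exploits the new ElGamal layer that EORE introduces, yielding a noticeably simpler reduction with no random-oracle programming beyond lazy sampling and no need to track $\mb{ind}$ across messages of the same client. Your route is heavier but has the virtue of being a literal replay of the basic-scheme argument; just be careful that for prefix positions shared with an earlier message $m_{i'}$ you use $X_{i',\cdot}$ rather than $X_{i,\cdot}$ when forming $C_{k,b,0}$, since the hash there was programmed to $g^{b_{i',\cdot}}$. With that detail made explicit, both proofs land on the same $\mb{H}_2$: replacing $F$ by random and then masking with $(h_j^t,g^t)$ is distributionally identical to drawing the whole pair uniformly.
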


\begin{proof}
To prove this lemma, we define a sequence of hybrid experiments $\mb{H}_1 =
\mb{H}_{1,0}, \mb{H}_{1,1}, \cdots, \mb{H}_{1,q} = \mb{H}_2$ as follows.
\begin{description}
\item [$\mb{H}_{1, \mu}$]: In this experiment, we change the generation of
    the $\mu$-th ciphertext if $(j_\mu, *)\notin S$. If $i \le \mu$ and
    $(j_i, *)\notin S$, the $i$-th EORE ciphertext $EC_{j_i}$ is generated
    by using random elements. Otherwise, the $i$-th EORE ciphertext
    $EC_{j_i}$ is generated by running the normal encryption algorithm.
    Note that $\mb{H}_{1,\mu-1}$ and $\mb{H}_{1,\mu}$ are trivially equal
    if $(j_i, *)\in S$.
\end{description}

Without loss of generality, we assume that $(j_\mu, *) \notin S$. Suppose
there exists an adversary $\mc{A}$ that distinguishes $\mb{H}_{1, \mu-1}$
from $\mb{H}_{1, \mu}$ with non-negligible advantage. A simulator $\mc{B}$
that solves the mXDH assumption using $\mc{A}$ is given: a challenge tuple $D
= \big( (p, \G, \hat{\G}, \G_T, e), g, \hat{g}, g^a, g^{b_1}, \cdots,
g^{b_{2n}}) \big)$ and $T = ( X_1, \cdots, X_{2n} )$. $\mc{B}$ interacts with
$\mc{A}$ as follows.

Let $( st_{\mc{A}}, S, ((j_1, m_1), \cdots, (j_q, m_q)) )$ be the output of
$\mc{A}$. The simulator $\mc{B}$ first sets the public parameters
corresponding to the client index. For each $j\in [N]$, if $j \neq j_\mu$,
$\mc{S}$ chooses a random exponent $\alpha_j \in \Z_p$ and computes $h_j =
g^{\alpha_j}$. For the target client $j_\mu$, it sets $h_{j_\mu} = g^a$.
Next, for each $j \in [N]$, $\mc{B}$ chooses a random exponent $s_j \in \Z_p$
and sets the secret key $SK_j = s_j$. It can generate the comparison key
$CK_{j,k}$ for any tuple $(j,k) \in S$ since it knows secret keys $s_j$ and
$s_k$.

Let $(j_i, m_i)$ be the $i$-th ciphertext query for a client index $j_i$. Let
$E_0(k,m) = \mb{prefix}(m,k) \| 0x_k$ and $E_1(k,m) = \mb{prefix}(m,k) \|
(0x_k + 1)$ be encoded messages where $m = x_1 \cdots x_n \in \bits^n$. If
$(j_i, *) \in S$, then $\mc{B}$ simply creates a ciphertext by running the
\tb{EORE.Encrypt} algorithm since it know the secret key $s_{j_i}$. If $(j_i,
*) \notin S$, then $\mc{B}$ creates the $i$-th EORE ciphertext $EC_{j_i}$ as
follows:

\begin{itemize}
\item \tb{Case} $i < \mu$: It chooses random elements $R_{k,0} =
    (R_{k,0,0}, R_{k,1,0}), R_{k,1} = (R_{k,0,1}, R_{k,1,1})\in \G \times
    \G$ for all $k\in[n]$ and creates $EC_{j_i} = ( \{R_{k,0},
    R_{k,1}\}_{k\in [n]} )$.

\item \tb{Case} $i = \mu$: For each $\beta \in \bits$, it computes $F_{k,
    \beta} = H( E_\beta(k, m_i) )^{s_{j_i}}$ for all $k \in [n]$. For each
    $F_{k,0}$ and $F_{k,1}$, it sets $C_{k, 0} = (F_{k, 0}\cdot X_{2k-1},
    g^{b_{2k-1}})$ and $C_{k, 1} = (F_{k, 1}\cdot X_{2k}, g^{b_{2k}})$ and
    creates $EC_{j_i} = ( \{ C_{k,0}, C_{k,1} \}_{k\in [n]})$.

\item \tb{Case} $i > \mu$: It creates the EORE ciphertext $EC_{j_i}$ by
    running the \tb{EORE.Encrypt} algorithm.
\end{itemize}
If $T = (g^{ab_1}, \cdots, g^{ab_{2n}})$, then $EC_{j_\mu}$ is a ciphertext
in $\mb{H}_{1,\mu-1}$. Otherwise, $EC_{j_\mu}$ is a ciphertext in
$\mb{H}_{1,\mu}$. By the mXDH assumption, two experiments $\mb{H}_{1,\mu-1}$
and $\mb{H}_{1,\mu}$ are computationally indistinguishable.
\end{proof}

\begin{lemma} \label{lem:mc-ore-2-h2-h3}
The hybrid experiments $\mb{H}_2$ and $\mb{H}_3$ are computationally
indistinguishable to the polynomial-time adversary assuming that the mXDH
assumption holds.
\end{lemma}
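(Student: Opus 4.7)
The plan is to mimic the structure of the proof of Lemma \ref{lem:mc-ore-1-h1-h2} and reduce the transition from $\mb{H}_2$ to $\mb{H}_3$ to the mXDH assumption via a sequence of sub-hybrids indexed by the co-related client index sets $\mb{RI}_\mu$. Specifically, I would define $\mb{H}_2 = \mb{H}_{2,0}, \mb{H}_{2,1}, \ldots, \mb{H}_{2,\tilde{q}} = \mb{H}_3$ where $\mb{H}_{2,\mu}$ replaces the EORE ciphertexts of clients in the first $\mu$ co-related index sets by uniformly random group elements, while leaving the remaining EORE ciphertexts (and all ORE ciphertexts, which are already simulated from $\mc{L}_j^{ORE}$ since $\mb{H}_1$) generated honestly. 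Showing $\mb{H}_{2,\mu-1} \approx_c \mb{H}_{2,\mu}$ for every $\mu$ then yields the lemma by a standard hybrid argument.

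For each transition, given an mXDH instance $(g, \hat{g}, g^a, \{g^{b_{i,1}}, \ldots, g^{b_{i,2n}}\}_{i \in [t]})$ with challenge $T$, the simulator $\mc{B}$ implicitly sets $SK_j = a s_j$ for every client $j$ in the target set $\mb{RI}_\mu$ (with $s_j$ a freshly chosen exponent) and $SK_j = s_j$ for every other client. Crucially, $\mc{B}$ picks every $a_j$ itself and therefore can publish $h_j = g^{a_j}$ and internally compute $\hat{h}_j = \hat{g}^{a_j}$ as needed. For any $(j,k) \in S$ both endpoints lie in the same co-related set by definition, so $\mc{B}$ handles comparison keys by cases: if neither endpoint is in $\mb{RI}_\mu$, it generates $CK_{j,k}$ honestly; if both are in $\mb{RI}_\mu$, it reparameterizes by sampling $r$ and outputting $K_{0,0} = \hat{g}^{r s_j}$, $K_{0,1} = \hat{g}^{a_k r s_j}$, $K_{1,0} = \hat{g}^{r s_k}$, $K_{1,1} = \hat{g}^{a_j r s_k}$, which is identically distributed to an honest key under the change of variable $r \mapsto r/a$. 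Hash queries are pre-programmed to the $g^{b_{i,\cdot}}$ values following exactly the chaining strategy used in the proof of Lemma \ref{lem:mc-ore-1-h1-h2}, so that prefixes shared across messages yield consistent hash outputs.

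Ciphertext simulation then splits into cases. For $(j_i, *) \notin S$, the EORE ciphertext is random as prescribed by $\mb{H}_2$; for clients in co-related sets $\ell < \mu$, $\mc{B}$ emits random $(C_{i,b,0}, C_{i,b,1})$ pairs (chained via the same strategy as in Lemma \ref{lem:mc-ore-1-h1-h2} to remain consistent with repeated messages); for $\ell > \mu$, $\mc{B}$ encrypts honestly because the corresponding secret keys do not involve $a$; and for the target set $\mb{RI}_\mu$, $\mc{B}$ sets $C_{i,b,0} = X_{i,\cdot}^{s_{j_i}} \cdot h_{j_i}^t$ and $C_{i,b,1} = g^t$ for a fresh $t$, again chaining prefixes to match the hash programming. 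When $T = \{g^{a b_{i,\cdot}}\}$, these values equal $H(\cdot)^{a s_{j_i}} h_{j_i}^t$, reproducing $\mb{H}_{2,\mu-1}$; when $T$ is random, the $X_{i,\cdot}^{s_{j_i}}$ factors uniformly randomize $C_{i,b,0}$, matching $\mb{H}_{2,\mu}$.

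The main obstacle will be verifying that the additional ElGamal-style blinding $h_j^t$ introduced by EORE does not break the reduction: in particular, I must check that the extra comparison-key components $\hat{h}_k^{r s_j}$ and $\hat{h}_j^{r s_k}$ can be produced without knowing $a$, which works precisely because $a_j, a_k$ are chosen by $\mc{B}$ and are independent of the challenge. A secondary bookkeeping task is ensuring the prefix-chaining of both the hash table and the mXDH challenge columns interacts correctly with the per-element ElGamal randomness $t$ (which must be chosen independently for each $(i,b)$), so that repeated messages in the query list produce consistently distributed ciphertexts under both values of $T$.
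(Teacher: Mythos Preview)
Your proposal is correct and follows essentially the same route as the paper. The only difference is presentational: the paper proves this lemma modularly by invoking the simulator $\mc{B}_{bMC\text{-}ORE}$ of Lemma~\ref{lem:mc-ore-1-h1-h2} as a black-box subsimulator---it obtains the basic MC-ORE components $\{F_{k,b}\}$ and comparison keys $(K_0,K_1)$ from that subsimulator and then wraps them with the ElGamal layer $(F_{k,b}\cdot h_{j_i}^t,\,g^t)$ and the extra key components $(K_0^{\alpha_k},K_1^{\alpha_j})$---whereas you unroll that reduction inline and redo the hash-programming and secret-key embedding explicitly. Both versions rely on the same two observations you isolate: the $\alpha_j$'s are chosen by the reduction (so the extra $\hat{h}$-components of $CK_{j,k}$ are computable without knowing $a$), and the per-element ElGamal randomness $t$ is independent of the mXDH challenge (so it composes cleanly with the prefix-chaining). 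One small point to tidy: in your $\ell<\mu$ case you should keep $C_{i,b,1}=g^t$ rather than making the full pair uniformly random, so that the distribution matches what the $\ell=\mu$ case produces when $T$ is random; otherwise adjacent sub-hybrids do not line up exactly.
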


\begin{proof}
We additionally define a sequence of hybrid experiments $\mb{H}_2 =
\mb{H}_{2, 0}, \mb{H}_{2, 1}, \cdots, \mb{H}_{2, \tilde{q}} = \mb{H}_3$ for
some $\tilde{q}$ as follows.

\begin{description}
\item [$\mb{H}_{2, \mu}$]: Let $I = ( j_1, \cdots, j_q )$ be a tuple of
    challenge client index and $\mb{RI}_\mu = \{ i \in [q] : j_i\text{s are
    co-related indices} \}$ be an index set of co-related client indices
    where $\mu \in [\tilde{q}]$. In this experiment, we change the
    generation of the $\mu$-th EORE ciphertext set with the index set
    $\mb{RI}_\mu$. If $\ell \le \mu$, the EORE ciphertexts in the $\ell$-th
    ciphertext set with $\mb{RI}_\ell$ are changed to be random elements.
    Otherwise, the ciphertexts in the $\ell$-th ciphertext set with
    $\mb{RI}_\ell$ are generated by running the normal encryption
    algorithm. Note that the ciphertexts with the client index $j_i$ such
    that $(j_i, *)\notin S$ in $\mb{H}_{2,\mu-1}$ and $\mb{H}_{2,\mu}$ are
    equally generated by using random elements.
\end{description}

Suppose there exists an adversary $\mc{A}$ that distinguishes $\mb{H}_{2,
\mu-1}$ from $\mb{H}_{2, \mu}$ with non-negligible advantage. A simulator
$\mc{B}$ that solves the mXDH assumption using $\mc{A}$ is given: a challenge
tuple $D = \big( (p, \G, \hat{\G}, \G_T, e), g,\lb \hat{g}, g^a, \{ g^{b_{i,1}},
\cdots, g^{b_{i,2n}} \}_{i \in [t]} \big)$ and $T = \big( \{ X_{i,1}, \cdots,
X_{i,2n} \}_{i \in [t]} \big)$. $\mc{B}$ runs the simulator
$\mc{B}_{bMC\text{-}ORE}$ of the Lemma \ref{lem:mc-ore-1-h1-h2} as a
subsimulator by submitting the challenge tuple of the mXDH assumption. Then
$\mc{B}$ that interacts with $\mc{A}$ is described as follows.

Let $( st_{\mc{A}}, S, ((j_1, m_1), \cdots, (j_q, m_q)) )$ be the output of
$\mc{A}$. The simulator $\mc{B}$ first sets the public parameters
corresponding to the client index. For each $j\in [N]$, $\mc{S}$ chooses a
random exponent $\alpha_j \in \Z_p$ and computes $h_j = g^{\alpha_j}$. For
each tuple $(j, k )\in S$, $\mc{B}$ obtains $CK'_{j,k} = ( K_0, K_1 )$ by
running $\mc{B}_{bMC\text{-}ORE}$ and computes the comparison key $CK_{j,k} =
( K_0, {K_0}^{\alpha_k}, K_1, K_1^{\alpha_j} )$.

For the creation of the EORE ciphertexts with the client index $j_i$ for $i
\in \mb{RI}_\ell$, $\mc{B}$ first runs $\mc{B}_{bMC\text{-}ORE}$ and obtains
$CT'_{j_i} = ( \{ F_{k,0}, F_{k,1} \}_{k \in [n]} )$. For each $F_{k,b}$, it
chooses a random exponent $t\in \Z_p$ and computes $C_{k,b} = (F_{k,b}\cdot
{h_{j_i}}^t, g^t)$ where $b\in \bits$. It creates $EC_{j_i} = ( \{ C_{k,0},
C_{k,1} \}_{k \in [n]} )$ and hence creates the EORE ciphertext set
$\mb{EC}_{RI_\ell} = ( \{ EC_{j_i} \}_{i \in RI_\ell} )$

By the Lemma \ref{lem:mc-ore-1-h1-h2}, two experiments $\mb{H}_{2,\mu-1}$ and
$\mb{H}_{2,\mu}$ are computationally indistinguishable.
\end{proof}

\begin{lemma} \label{lem:mc-ore-2-h3-h4}
The hybrid experiments $\mb{H}_3$ and $\mb{H}_4$ are indistinguishable to the
polynomial-time adversary with the leakage function $\mc{L}_S^{EORE}$ in the
random oracle model.
\end{lemma}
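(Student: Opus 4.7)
My plan is to imitate the structure of Lemma~\ref{lem:mc-ore-1-h2-h3}, which handled the analogous $\mb{H}_2\to\mb{H}_3$ transition for the basic MC-ORE scheme, and then layer the EORE public-key encryption wrapper on top. Concretely, the simulator $\mc{B}$ first samples all the public parameters honestly: for each $j\in[N]$ it picks $\alpha_j, s_j\in\Z_p$, sets $h_j=g^{\alpha_j}$, $\hat h_j=\hat g^{\alpha_j}$, and treats $s_j$ as the secret key. Since $\mc{B}$ knows every $(s_j,\alpha_j)$, it can answer every comparison-key query $(j,k)\in S$ honestly by producing $CK_{j,k}=(\hat g^{rs_j},\hat h_k^{rs_j},\hat g^{rs_k},\hat h_j^{rs_k})$ for a fresh $r$. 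The random-oracle table $T_H$ is maintained lazily as in the previous lemmas.

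The heart of the proof is the generation of the EORE ciphertexts for the co-related index sets $\mb{RI}_\mu$. For each such set, I would invoke the simulator $\mc{B}_{bMC\text{-}ORE}$ of Lemma~\ref{lem:mc-ore-1-h2-h3} as a black-box subroutine, feeding it exactly the portion of $\mc{L}_S^{EORE}$ that corresponds to $\mb{RI}_\mu$ (which by definition of $\mc{L}_S^{EORE}$ contains the same $\mb{cmp}$ and $\mb{ind}$ information that $\mc{B}_{bMC\text{-}ORE}$ needs). This yields, for each query $(j_i,m_i)$ with $i\in\mb{RI}_\mu$, a simulated basic-MC-ORE ciphertext $(\{F_{k,0},F_{k,1}\}_{k\in[n]})$ that is programmed so that the appropriate entries coincide across different $i$'s. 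To turn this into an EORE ciphertext, $\mc{B}$ draws fresh exponents $t_{k,b}$ and outputs $C_{k,b,0}=F_{k,b}\cdot h_{j_i}^{t_{k,b}}$, $C_{k,b,1}=g^{t_{k,b}}$. Ciphertexts for client indices $j$ with $(j,*)\notin S$ are kept as in $\mb{H}_3$, i.e.\ uniform in $\G\times\G$.

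To establish indistinguishability, I would argue in two stages. First, conditioned on the $F_{k,b}$ being distributed as in the honest EORE experiment (which, by Lemma~\ref{lem:mc-ore-1-h2-h3}, is statistically indistinguishable from the random-programmed distribution produced by $\mc{B}_{bMC\text{-}ORE}$), the wrapped ciphertexts $(F_{k,b}\cdot h_{j_i}^{t_{k,b}},g^{t_{k,b}})$ are perfectly distributed as real EORE ciphertexts. Second, I must check functional correctness under the revealed comparison keys: for any $(j,k)\in S$ and any pair of simulated ciphertexts, the quantity $e(C_{i,b,0},K_{1,0})/e(C_{i,b,1},K_{1,1})$ collapses to $e(F_{i,b},\hat g)^{r s_j s_k}$, and symmetrically on the primed side, so the programmed equalities among the $F_{k,b}$ translate verbatim into the equalities among pairing ratios that the adversary can test. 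Hence the joint distribution of (ciphertexts, comparison keys, random oracle answers) in the simulation is statistically close to that of $\mb{H}_3$, while it is by construction exactly the output of the ideal experiment $\mb{H}_4$.

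The main obstacle I anticipate is bookkeeping rather than a hard reduction: I must make sure that the single black-box invocation of $\mc{B}_{bMC\text{-}ORE}$ suffices even though EORE comparison keys carry the additional components $\hat h_k^{rs_j}, \hat h_j^{rs_k}$, and that the random exponents $t_{k,b}$ used in the PKE wrapping genuinely decouple the two ciphertext coordinates so no additional information leaks beyond $\mc{L}_S^{EORE}$. This decoupling is exactly the property that lets the EORE leakage exclude the $j_{i'}=j_i$ case of $\mc{L}_S$ (see Remark~\ref{remark:leakage-eore}); verifying it boils down to noting that without $CK_{j,k}$ the tuple $(F_{k,b}h_j^t,g^t)$ is a semantically-secure ElGamal-style encryption of $F_{k,b}$ under $h_j$, which I would formalise via the XDH assumption applied to the $t$'s. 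Once this is in place, the induction over $i$ carried out in Lemma~\ref{lem:mc-ore-1-h2-h3} transports through the PKE wrapper unchanged, completing the argument.
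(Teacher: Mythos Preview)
Your overall approach is the same as the paper's: run the basic-scheme simulator $\mc{B}_{bMC\text{-}ORE}$ of Lemma~\ref{lem:mc-ore-1-h2-h3} as a subroutine to produce the inner values $\{F_{k,b}\}$, then wrap each one as $(F_{k,b}\cdot h_{j_i}^{t},g^{t})$. However, there is one concrete inconsistency in your setup that would break the simulation as written.

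You have $\mc{B}$ choose its own secret keys $s_j$ and manufacture the comparison keys $CK_{j,k}=(\hat g^{rs_j},\hat h_k^{rs_j},\hat g^{rs_k},\hat h_j^{rs_k})$ directly, while separately invoking $\mc{B}_{bMC\text{-}ORE}$ as a black box for the ciphertexts. But $\mc{B}_{bMC\text{-}ORE}$ internally samples its own secret keys $s_j'$ and outputs $F_{k,b}=c_{k,b}^{\,s_{j_i}'}$ for co-related clients. The pairing ratio the adversary computes then collapses to $e(c_{k,b},\hat g)^{\,r\,s_j'\,s_k}$ on one side and $e(c_{k,b},\hat g)^{\,r\,s_k'\,s_j}$ on the other; these coincide only if $s_j' s_k = s_k' s_j$, which fails for independently chosen exponents. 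So with your key generation, the simulated comparison keys are \emph{not} compatible with the simulated ciphertexts, and the adversary can distinguish. The paper avoids this by letting $\mc{B}_{bMC\text{-}ORE}$ produce the basic comparison key $CK'_{j,k}=(K_0,K_1)$ itself (so its hidden $s_j$ are used consistently throughout), and then $\mc{B}$ only supplies the $\alpha_j$'s to lift it to the EORE key via $CK_{j,k}=(K_0,K_0^{\alpha_k},K_1,K_1^{\alpha_j})$. Once you route the comparison keys through $\mc{B}_{bMC\text{-}ORE}$ in this way, your functional-correctness check goes through exactly as you wrote it.

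A smaller point: the XDH-based ElGamal argument you sketch at the end is not needed in this lemma. That computational step was already spent in Lemmas~\ref{lem:mc-ore-2-h1-h2} and~\ref{lem:mc-ore-2-h2-h3} to reach $\mb{H}_3$; the $\mb{H}_3\to\mb{H}_4$ transition, like Lemma~\ref{lem:mc-ore-1-h2-h3}, is a purely statistical statement about the simulator's output distribution, and the paper treats it as such.
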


\begin{proof}
Suppose there exists an adversary $\mc{A}$ that distinguishes $\mb{H}_3$ from
$\mb{H}_4$ with non-negligible advantage. We construct an efficient simulator
$\mc{B}$ for which the two distributions $\mb{H}_3$ and $\mb{H}_4$ are
statistically indistinguishable. $\mc{B}$ runs the simulator
$\mc{B}_{bMC\text{-}ORE}$ of the Lemma \ref{lem:mc-ore-1-h2-h3} as a
subsimulator.

Let $( st_{\mc{A}}, S, ((j_1, m_1), \cdots, (j_q, m_q)) )$ be the output of
$\mc{A}$. The simulator $\mc{B}$ first sets the public parameters
corresponding to the client index. For each $j \in [N]$, $\mc{B}$ chooses a
random exponent $\alpha_j \in \Z_p$ and computes $h_j = g^{\alpha_j}$. For
each tuple $(j, k )\in S$, $\mc{B}$ obtains $CK'_{j,k} = ( K_0, K_1 )$ by
running $\mc{B}_{bMC\text{-}ORE}$ and computes the comparison key $CK_{j,k} =
( K_0, {K_0}^{\alpha_k}, K_1, K_1^{\alpha_j} )$.

For the generation of the $i$-th EORE ciphertext $EC_{j_i}$ with the client
index $j_i$, $\mc{B}$ first runs $\mc{B}_{bMC\text{-}ORE}$ and obtains
$CT'_{j_i} = ( \{ F_{k,0}, F_{k,1} \}_{k \in [n]} )$. For each $F_{k,b}$, it
chooses a random exponent $t\in \Z_p$ and computes $C_{k,b} = (F_{k,b}\cdot
{h_{j_i}}^t, g^t)$ where $b\in \bits$. It creates $EC_{j_i} = ( \{ C_{k,0},
C_{k,1} \}_{k \in [n]} )$.

By the Lemma \ref{lem:mc-ore-1-h2-h3}, the distributions $(CT'_{j_1}, \ldots,
CT'_{j_q})$ and $(\overline{CT'}_{j_1}, \ldots, \overline{CT'}_{j_q})$ of the
basic MC-ORE ciphertexts output in $\mb{H}_2$ and $\mb{H}_3$ of the Theorem
\ref{thm:mc-ore-1} are indistinguishable. Thus, it can be easily derived that
the distributions $(EC_{j_1}, \ldots, EC_{j_q})$ and $(\overline{EC}_{j_1},
\ldots, \overline{EC}_{j_q})$ in $\mb{H}_3$ and $\mb{H}_4$ are also
indistinguishable. Since $OC_j$ and $EC_j$ have been generated with respect
to $\mc{L}_j^{ORE}$ and $\mc{L}_S^{EORE}$, $\mb{H}_4$ corresponds to the
ideal experiment.
\end{proof}

\section{Implementation}

In this section, we measure the performance of our MC-ORE schemes and compare
various ciphertext comparison methods. Our implementation is entirely written
in C and employs a 224-bit MNT curves from the PBC library for pairing
operations. We run our implementation on a laptop with 4GHz Intel Core
i7-6700K CPU and 16GB RAM.

\subsection{Performance of MC-ORE}

We evaluate the runtime of \tb{Encrypt}, \tb{Compare}, and \tb{CompareMC}
algorithms for 32-bit integers and the benchmarks averaged over 100
iterations are given in Table \ref{tab:mcore-bench}. Compared to the basic
MC-ORE scheme, the encrypted ORE scheme takes more time to run each algorithm
and the size of the ciphertext and the comparison key is about twice as
large. The reason why the encrypted ORE scheme is less efficient is that its
\tb{Encrypt} algorithm runs the \tb{Encrypt} algorithm of the basic MC-ORE
scheme as a subalgorithm and the \tb{CompareMC} algorithm requires twice as
many pairing operations as the basic MC-ORE scheme. This shows that although
the security of the MC-ORE scheme is improved by reducing the leakage, at the
same time, the efficiency is decreased. We note that the runtime of the
\tb{Compare} algorithm and the accurate size of the ciphertext of the
enhanced MC-ORE scheme depend on the underlying ORE scheme.

\begin{table*}[t]
\caption{Performance comparison between our MC-ORE schemes} \label{tab:mcore-bench}
\centering
\vs \small \addtolength{\tabcolsep}{5.5pt}
\renewcommand{\arraystretch}{1.2}
\newcommand{\otoprule}{\midrule[0.09em]}
    \begin{tabularx}{6.10in}{lrrrrr}
    \toprule
    Scheme          & Encrypt $(ms)$    & Compare $(\mu s)$    & CompareMC $(ms)$   &
    $|CT|$          & $|CK|$ \\
    \otoprule
    Basic MC-ORE    & 45.8     & 1.65   & 35.6    &$2n |\G|$     & $2 |\hat{\G}|$ \\
    Encrypted ORE   & 107.4    & -      & 58.2    &$> 4n |\G|$   & $4 |\hat{\G}|$ \\
    \bottomrule
    \end{tabularx}
\end{table*}

\subsection{Range Query Methods}

One possible application of MC-ORE is a range query for an encrypted
database, in which case a database server must perform the multi-client
comparison algorithm many times to find a subset of database that satisfies
the range query. Suppose that a database sever keeps each client database
$D_j \in [R]^M$ that store ciphertexts generated by a client with index $j$
where the database consists of maximum $M$ values each in the range $[R]$. A
client with an index $k$ may request a range query by giving $CT'$ on a
plaintext $m'$ encrypted with $SK_k$ to find a subset of ciphertexts in $D_j$
less than $m'$. If the server has a comparison key $CK_{j,k}$, then it can
answer the query by simply running the multi-client comparison algorithm $M$
times. However, this naive method is very slow since $M$ multi-client
comparison operations are required. Therefore, we need better methods to
handle range queries by using comparison operations more efficiently.

We present two additional methods and compare these methods with the simple
method described before. The detailed explanation of each method is given as
follows.
\begin{itemize}
\item \tb{Simple Method.} The simple method simply runs the \tb{CompareMC}
    algorithm $M$ times. Recall that the \tb{CompareMC} algorithm tries to
    find the MSDB from the higher bit to the lower bit sequentially. If the
    MSDB is located in higher bits, then the comparison operation is
    considerably efficient. On the other hand (if the MSDB is located in
    lower bits), the comparison operation is relatively slow.

\item \tb{BinSearch Method.} The binary search method uses a modification
    of the \tb{CompareMC} algorithm that finds the MSDB more efficiently by
    using binary searching instead of sequential searching. Let $CT =
    \{C_{i,0}, C_{i,1}\}_{i \in [n]}$ be one ciphertext in a database $D_j$
    and $CT' = \{C'_{i,0}, C'_{i,1}\}_{i \in [n]}$ be a ciphertext created
    by a client with $k$. A server with a comparison key $CK_{j,k} = (K_0,
    K_1)$ first checks whether $e(C_{n/2,0}, K_1)$ and $e(C'_{n/2,0}, K_0)$
    are equal or not. If the values are equal, it checks again whether
    $e(C_{3n/4, 0}, K_1)$ and $e(C'_{3n/4,0}, K_0)$ are equal since the
    MSDB is in the range $[n/2 +1,n]$. On the other hand, if the values are
    not equal, it checks whether $e(C_{n/4,0}, K_1)$ and $e(C'_{n/4,0},
    K_0)$ are equal since the MSDB is in the range $[1,n/2]$. By repeating
    this process $\log n$ times, the server can find the MSDB. Since the
    database contains at most $M$ entries, it runs this modified comparison
    algorithm $M$ times.

\item \tb{Hybrid Method.} The hybrid method uses the \tb{CompareMC}
    algorithm and the \tb{Compare} algorithm together since the
    \tb{Compare} algorithm is fast and it can compare the order of
    ciphertexts in a database which are created by a single client. Let
    $CT_i$ be a ciphertext on a message $m_i$ in a database $D_j$ and $CT'$
    be a ciphertext on a message $m'$ given by a client in a range query.
    To answer the range query of the client, a server should find a subset
    $S$ of ciphertexts in $D_j$ such that $m_i < m'$. The server first
    compares $CT'$ with one specific $CT_{i^*} \in D_j$ by running the
    \tb{CompareMC} algorithm, and then it divides all other $CT_i \in D_j$
    into two groups $L$ and $R$ by running the \tb{Compare} algorithm on
    input $CT_i$ and $CT_{i^*}$ where $L$ contains ciphertexts of $m_i <
    m_{i^*}$ and $R$ contains ciphertexts of $m_{i^*} \leq m_i$. If
    $m_{i^*} < m'$, then the server adds $L$ to the subset $S$ and repeats
    the above process by setting $D_j = R$. If $m' < m_{i^*}$, the sever
    repeasts the above process by setting $D_j = L$.
\end{itemize}

\begin{table*}[t]
\caption{Performance comparison between range query methods} \label{tab:method-comp}
\centering
\vs \small \addtolength{\tabcolsep}{3.5pt}
\renewcommand{\arraystretch}{1.2}
\newcommand{\otoprule}{\midrule[0.09em]}
    \begin{tabularx}{4.90in}{lccc}
    \toprule
    $R$    & Simple Method $(sec)$ & BinSearch Method $(sec)$ & Hybrid Method $(sec)$ \\
    \otoprule
    $2^8$       & 11.50    & 4.79      & 2.46     \\
    $2^{16}$    & 8.19     & 4.84      & 1.73     \\
    $2^{24}$    & 5.90     & 4.89      & 1.27     \\
    $2^{28}$    & 4.55     & 4.89      & 1.10     \\
    $2^{32}$    & 3.48     & 4.82      & 0.87     \\
    \bottomrule
    \end{tabularx}
\end{table*}

We compared the performance of each method only for the basic MC-ORE scheme.
For the comparison, we set $M = 100$ and $R \in \{ 2^8, 2^{16}, 2^{24},
2^{28}, 2^{32} \}$. We randomly selected 32-bit integers $m_1, \cdots,
m_{100}$ and $m'$ within a specific range $[0, R]$, and then we encrypted
$m_1, \cdots, m_{100}$ with $SK$ and $m'$ with $SK'$. The running time of the
above three range query methods is given in Table \ref{tab:method-comp}.
The binary search method executes 12 pairing operations per a single
ciphertext comparison whereas the simple method performs executes at least 4
up to 66 pairing operations depending on the location of the MSDB. In other
words, the binary search method is better than the simple method if the data
are within a small range and the high-order bits are equal, but it is less
efficient if the data are within a large range and the probability that the
high-order bits are equal is lower. The hybrid method is always more
efficient than the simple method and the binary search method, since some
comparisons are performed by using the \tb{Compare} algorithm instead of
using the \tb{CompareMC} algorithm. That is, the performance of the hybrid
method is far superior because the \tb{CompareMC} algorithm is performed for
comparisons on the specific ciphertexts and the \tb{Compare} algorithm is
executed for the remaining ciphertext comparisons.

It is important to improve the performance of the algorithm, but our results
show that efficiency can be improved in an appropriate way depending on the
application environment. If our MC-ORE scheme is applied to an environment
other than a database range query, we can consider another way to improve its
performance or its security.

\section{Conclusion}

We introduced the concept of multi-client order-revealing encryption (MC-ORE)
that supports comparisons on ciphertexts generated by multiple clients as
well as generated by one client. We also defined the simulation-based
security model for MC-ORE with respect to a leakage function. We then
proposed two practical MC-ORE schemes with different leakage functions and
proved their security in the defined security model. The first scheme leaks
more information, namely the most significant differing bit, and the second
scheme is the enhanced scheme with reduced leakage. We implemented our
schemes to measure the performance of each algorithm and provided additional
range query methods to improve the performance in a database range query.

\bibliographystyle{plain}
\bibliography{mc-ore-with-leakage}

\end{document}